\newtheorem{theorem}{Theorem}[section]
\newtheorem{lemma}[theorem]{Lemma}
\def \bra {\langle}
\def \ket {\rangle}
\def \inplane {\kappa}
\def \vphi {\varphi} 
\def \rms {\varepsilon} 
\def \ts#1{\textsuperscript{#1}}
\renewcommand \r {\bm{\mathrm{r}}}
\renewcommand\onecolumngrid{% <<<<<<
\do@columngrid{one}{\@ne}%
\def\set@footnotewidth{\onecolumngrid}% <<<<<<<<<<<<<<<<
\def\footnoterule{\kern-6pt\hrule width 1.5in\kern6pt}%
}
\renewcommand\twocolumngrid{% <<<<<<
        \def\footnoterule{% restore rule
        \dimen@\skip\footins\divide\dimen@\thr@@
        \kern-\dimen@\hrule width.5in\kern\dimen@}
        \do@columngrid{mlt}{\tw@}
}%
\begin{document}

\title{Exactly Solvable %Lattice Mean-field
Model of Randomly Coupled Twisted Superconducting Bilayers}
\author{Andrew C. Yuan}
\affiliation{Department of Physics, Stanford University, Stanford, CA 94305, USA}

% \author{Steven A. Kivelson}
% \affiliation{Department of Physics, Stanford University, Stanford, CA 94305, USA}
% \affiliation{Rudolf Peierls Centre for Theoretical Physics, University of Oxford, Oxford OX1 3PU, United Kingdom}
\date{\today}

\begin{abstract}
Motivated by recent experiments on twisted junctions of cuprate superconductors (SC), it was proposed \cite{yuan2023inhomogeneity} that at zero temperature, a random first order Josephson coupling $J_1(\r) \cos \phi$ generates an ``effective" global second order coupling, $J_2\cos(2\phi)$, with a sign that favors $\phi = \pm \pi/2$, i.e., spontaneous breaking of time reversal symmetry (TRS).
To obtain a more controlled understanding of the suggested ``disorder-induced-order" mechanism, we construct an exactly solvable lattice mean field model and prove that when the disorder-average $\bar{J}_1=0$, the model exhibits a TRS breaking phase for all temperatures below the SC transition, i.e., $T_c = T_{\mathrm{TRSB}}$, regardless of the specific form of disorder. 
In the presence of nonzero $\bar{J}_1\ne 0$, we show that the two transitions split linearly for small $\bar{J}_1 \ll \inplane$ (where $\inplane$ is the in-plane SC stiffness), and that $T_{\mathrm{TRSB}}$ vanishes for $\bar J_1>  J_c$ where $ J_c= \overline{J^2_1}/\inplane$ in the weak disorder limit.
\end{abstract}
\maketitle

% --------------------------------------------------
\section{Introduction}
Twisting and stacking 2D quantum materials have proven to be powerful tools for creating new materials \cite{cao2018correlated,cao2018unconventional,yankowitz2019tuning} and probing their properties \cite{polshyn_large_2019,inbar2023quantum}.
In particular, the Josephson coupling between two $d$-wave superconductors has been studied as a function of the twist angle $\theta$ in an attempt to reveal their underlying pairing symmetry.
Early experiments in cuprate superconductors did not show the expected angle dependence of the Josephson coupling \cite{li1999bi}.
However, in a beautiful recent set of experiments \cite{zhao2021emergent}, an observed $|\cos (2\theta)|$ dependence of the Josephson coupling has been observed, potentially resolving the discrepancy.

Though still more recent experiments \cite{XueTwist} have reproduced earlier results (with the Josephson coupling having no angle dependence), we conjecture that this discrepancy has an extrinsic origin and focus on the experiment showing the expected angle dependence. 
In this case, at $\theta = 45^\circ$, the lowest-order Josephson coupling $J_1$ vanishes due to $d$-wave symmetry, but a substantial second-order coupling $J_2$ has been observed.
%Interestingly, 
Notably, such a phenomenon was predicted and studied by proposing an %innate 
intrinsic 2\ts{nd} order Joesphson coupling as the underlying mechanism \cite{can2021high}. 
Possible implications such as time-reversal symmetry breaking (TRSB), chiral superconductivity and gapped topological behavior near $\theta=45^\circ$ have aroused attention within the community \cite{can2021high,tummuru2022josephson,mercado2022high}.

Experimentally, the precise microscopic mechanism of the 2\ts{nd} order coupling and the sign of $J_2$ are still under investigation.
The extreme anisotropy of BSCCO (cuprates used in the experiments) 
implies that the expected magnitude of the intrinsic $J_2$ would be too small to explain observations \cite{yuan2023inhomogeneity}. 
Therefore, an alternative mechanism involving locally symmetry-breaking (nematic) inhomogeneities was proposed \cite{yuan2023inhomogeneity}.
The local nematicity induces a spatially varying 1\ts{st} order Josephson coupling, which can generate an effective $J_2$ with the required sign to favor TRSB, i.e., phase difference $\phi=\pm \pi/2$ between the two twisted superconductors.
The magnitude of the effective $J_2$ is enhanced by a factor of $\xi/\xi_\text{sc}$ where $\xi$ is the disorder correlation length and $\xi_\text{sc}$ is the superconducting coherence length, and thus for reasonably large $\xi$, the proposed mechanism is sufficiently large to explain the experimental observations.

It is worth mentioning that the proposal in Ref. \cite{yuan2023inhomogeneity} has a few restrictions due to the nature of the implemented perturbative approach. (1). It focuses on the zero temperature scenario where the in-plane nearest neighbor interaction can be approximated by the standard nonlinear sigma model (NLSM).
Therefore, whether the TRSB behavior can be extended to finite temperatures, and more importantly, up to temperatures comparable to the SC transition $T_c$ (to achieve high $T_\mathrm{TRSB} \approx T_c$ topological superconductivity), remains an open question.
(2). The approach is only valid in the limit of weak (Gaussian) disorder. 
While this aligns with the physical conditions typically encountered in cases like twisted cuprates, it doesn't address the validity of similar conclusions in scenarios involving arbitrarily large (possibly non-Gaussian) disorder. 
% The weak limit, though admittedly physically relevant to the case of twisted cuprates, does not answer whether similar statements can hold in arbitrarily large disorder.
% Indeed, since the SC gap in the TRSB phase is presumably positively correlated to the disorder strength, strong disorder couplings may be essential in producing physically relevant topological superconductivity.
Notably, the connection between the strength of disorder couplings and the magnitude of the SC gap in the TRSB phase suggests that robust topological superconductivity might necessitate substantial disorder interactions.
(3). In the extreme limit in which each twisted superconductor is a single 2D layer, the perturbative approach %is 
also encounters difficulties due to a %nuanced 
logarithmic divergence in the computation of the "effective" $J_2$\footnote{ 
This infrared divergence was treated using a variation mean field approach, in which an emergent length scale served as a physical cut-off.}.
Whether this affects the claim of disorder induced TRSB behavior also is yet to be determined.

In this paper, we will provide an alternative approach, and solve exactly a model with all-to-all interactions \cite{velenik} for which mean-field theory is exact.
The constructed model permits us to circumvent (ill-controlled) perturbative techniques and answer many of the open questions discussed previously, albeit we pay the price of a less physical model.
Nevertheless, we prove that at twist angle $\theta_c =45^\circ$ so that $\bar{J}_1=  0$, the model exhibits TRSB behavior for all temperatures below the superconducting transition, i.e., $T_c = T_\mathrm{TRSB}$\footnote{The absence of vestigial TRSB above $T_c$ will turn out to be trivial within the context of mean-field theory}, regardless of the specific form and coupling strength of the disorder\footnote{Similar exact statements can be made to the non-disordered inherent $J_2$ model with all-to-all interactions. In fact, the proof will be much simpler and we suggest the reader understand the non-disordered problem before moving on to the disordered problem. See Appendix \eqref{app:MFT-rigor} (or more specifically, Theorem \eqref{app-thm:non-disordered}) for details.}.
Near twist angle $\theta \approx 45^\circ$ so that the disorder-average $\bar{J}_1 \ne 0$, we show that the two transitions split linearly for small $\bar{J}_1 \ll \inplane$ (where $\inplane$ is the in-plance SC stiffness), with the TRSB phase eventually disappearing when $\bar{J}_1$ reaches $\bar{J}_1 = \overline{J^2_1}/\inplane$ in the limit of weak disorder $\overline{J^2_1} \to 0$.
In the case of twisted BSCCO, due to the large anisotropy, this implies that topological superconductivity can only be achieved in a narrow region around the critical twist angle $\theta_c = 45^\circ$.

The paper is organized as follows. Sec. \eqref{sec:model} introduces the classical statistical all-to-all model under investigation as well as the motivation of its construction and its relation to the standard nearest-neighbor XY model. 
Sec. \eqref{sec:free-energy} computes the exact free energy of the all-to-all model and disucsses its inherent symmetries. 
The correspondence between spontaneous symmetry breaking (SSB) and the respective transition temperatures is made explicit.
Sec. \eqref{sec:pert-results} illustrates the perturbative results, which are well-controlled in given limits and independent of the specific form of the disorder.
They are then compared with the corresponding numerical results shown in in Fig. \ref{fig:mft}.
Finally, Sec. \eqref{sec:general-T} proves the main statement, i.e., Theorem \eqref{claim:zero-Jbar}. 
The paper will only outline and cite the necessary theorems, while leaving the proof in the Appendix.

% --------------------------------------------------
\section{All-to-All Model: Motivation and Setup}
\label{sec:model}

We will represent the twist junction at twist-angle $\theta$ by a model Hamiltonian consisting of  two classical XY ferromagnets representing the phase degrees of freedom of the  individual superconducting planes with random-in sign local couplings between the two:
\begin{align}
    \label{eq:H}
    \mathcal{H}  
    &= -\inplane \sum_{\langle \r\r' \rangle ,l} \sigma_l(\r)\cdot \sigma_l(\r')- \sum_{\r} J(\r) \sigma_1(\r) \cdot \sigma_2(\r)
\end{align}
where the system is defined on a square lattice, $\sigma_l(\r) \in \mathbb{S}^1$ corresponds to the superconducting phase $\phi_l(\r)$ at site $\r$ and layer $l =\pm$, and the last term involving $\sigma_1(\r)\cdot \sigma_2(\r)=\cos \phi(\r)$ (where $\phi=\phi_1-\phi_2$) represents the first order Josephson coupling between the 2 layers. 
The disorder realization  $\bm{J} = (J(\r): \r\in \mathbb{Z}^2)$ satisfy $J(\r) = \bar{J}_\theta + \delta J(\r)$, where $\bar{J}_\theta = \bar{J}_0 \cos(2\theta)$ and $\theta$ corresponds to the twist angle, and $\delta \bm{J} =(\delta J(\r):\r\in \mathbb{Z}^2)$ are independently identically distributed (iid) \textit{even}\footnote{$\delta J(\r)$ and $-\delta J(\r)$ have the same distribution} random variables\footnote{It should be emphasized that the disorder does not necessarily have to be of Gaussian form.} such that the disorder average (denoted by $\overline{\cdots}$) satisfies $\overline{\delta J(\r)} = 0$ and $\overline{(\delta J(\r))^2} = \rms^2$.
Note that the twist angle $\theta$ enters entirely through $\bar{J}_\theta$ (and thus will often be omitted).
%whose form is determined by the $d$-wave symmetry of the order parameters of the two layers \cite{can2021high}.

Physically, the lattice spacing of our model corresponds to  the spatial correlation length $\xi$ of the disordered inter-layer Josephson coupling, whose length scale can be estimated from experimental data  \cite{yuan2023inhomogeneity}. 
$\inplane$ is the SC phase stiffness of an individual CuO$_2$ plane. 
Since the cuprate superconductors are highly anisotropic, both $\bar{J}$ and $\rms$ are much smaller than $\inplane$, with typical magnitudes discussed in Ref. \cite{yuan2023inhomogeneity}, though we will not make this restriction in our analysis unless otherwise stated.
This model resembles a random field XY model that was studied by Cardy and Ostlund \cite{PhysRevB.25.6899}, however it is distinct in the fact that the type of disorder we account for preserves time-reversal symmetry (TRS).

Due to the difficulty of treating the model introduced in Eq. \eqref{eq:H}, we shall instead consider a modified version of the model with all-to-all interactions in each plane for which many aspects of the system can be computed exactly from a mean-field approach.  
We thus define the ``mean-field Hamiltonian,''
\begin{equation}
    \label{eq:H-MF}
    \mathcal{H}_\text{MF}  = -\frac{d \inplane }{V} \sum_{\r,\r' ,l} \sigma_l(\r)\cdot \sigma_l(\r')- \sum_{\r} J(\r) \sigma_1(\r)\cdot \sigma_2(\r) 
\end{equation}
where, in common with the model introduced in Ref. \cite{velenik}, the nearest-neighbor in-plane interaction in Eq. \eqref{eq:H} is replaced by an infinite range coupling between in-plane spins, $d$ is the dimension of the in-plane system\footnote{In experiment, $d=2$, but we leave it as a general parameter since our mean-field calculations hold for all $d$.}, and $V$ is the volume (number of lattice sites in a single layer).
It should be emphasized, however, that the inter-layer interaction is kept local in space.
% --------------------------------------
\subsection{Relation of the Two Models}
Consider the nearest-neighbor model in Eq. \eqref{eq:H} and notice that the in-plane nearest-neighbor interaction between spins $\sigma \in \mathbb{S}^1$ can be rewritten as
\begin{align}
    \inplane \sum_{\bra \r \r'\ket}  \sigma_l (\r) \cdot \sigma_l (\r') &= \frac{\kappa}{2} \sum_{\r} \sigma_l(\r) \cdot \sum_{\r':\bra \r \r'\ket} \sigma_l(\r') \\
    &= \kappa d \sum_{\r} \sigma_l(\r) \cdot \left[ \frac{1}{2d} \sum_{\r':\bra \r \r'\ket} \sigma_l(\r') \right]
\end{align}
For every lattice site $\r$, its $2d$ nearest-neighbor $\r'$ spins generates an effective magnetic field which interacts with $\sigma (\r)$. The mean field approximation we consider here is then replacing the nearest-neighbor field with a system-average magnetic field, i.e.,
\begin{align}
    \frac{1}{2d} \sum_{\r':\bra \r \r'\ket} \sigma_l(\r')  \mapsto \frac{1}{V} \sum_{\r'} \sigma_l(\r')
\end{align}
Where $V$ is the system size, i.e., number of lattice sites. We thus obtain $\mathcal{H}_\text{MF}$ in Eq. \eqref{eq:H-MF}.

It should be emphasized that the two models (Eq. \eqref{eq:H} and \eqref{eq:H-MF}) are only expected to coincide asymptotically when the effective number of neighboring spins is large as occurs in long-range interactions or in high dimensions ($d\to \infty$)\footnote{A similar mean-field model (called the Curie-Weiss model) was constructed for the standard Ising model, and it was shown that the free energy density of the two models coincides asymptotically in the limit $d\to \infty$ \cite{thompson1974ising}.}, though, similar to any other mean-field theory, it is generally believed that the two share the same physics in $d\ge 4$ dimensions \cite{velenik}.
In lower dimensions $d=2,3$, fluctuations due to short-range interactions of Eq. \eqref{eq:H} may be important and we comment more on this at the end of this paper.

% --------------------------------------------------
\section{Exact Free Energy in Thermodynamic Limit}
\label{sec:free-energy}
Using a Hubbard-Stratonovich transformation (see Appendix \eqref{app:MFT-HS}), we can obtain the free energy density of $\mathcal{H}_\text{MF}$ as
\begin{equation}
    \label{eq:freee-energy-MF}
    % \textcolor{red}{\frac{1}{V}
    \mathcal{F}_\text{MF}=\lim_{V\to \infty} \overline{\frac{-1}{\beta V} \log \mathcal{Z}_{V,\bm{J}}} = \min_{w_{\pm} \in \mathbb{R}^2} F(w_+,w_-)
\end{equation}
Where the minimizers $w_l^\star$ (the star $^\star$ is to distinguish from arbitrary values $w_l$) correspond to the magnetizations\footnote{The actual magnetizations of each layer have a 1-1 correspondence with the minimizers $w_l^\star$ (see Appendix \eqref{app:MFT-HS}). A nontrivial magnetization in layer $l$ implies $w_l^\star \ne 0$ and vice-versa.} of each layer $l =\pm $, and
\begin{align}
    \label{eq:free-energy}
    F  &= \frac{1}{4 d\inplane} (w_+^2 +w_-^2) - \frac{1}{\beta}\overline{G_J} \\
    G_J &= \ln \iint d\sigma_+d\sigma_- \; e^{-\beta H_J},  \\
    \label{eq:two-site-H}
    -H_J &= w_+\cdot \sigma_+ +w_-\cdot \sigma_- + J \sigma_+\cdot \sigma_-
\end{align}
Where $H_J$ denotes a two-site Hamiltonian\footnote{Notice that $F$ can be regarded as the disordered-average free energy of $H_{J} +(w_+^2+w_-^2)/4d\inplane$} with spins $\sigma_\pm \in \mathbb{S}^1$. To simplify notation, let us define $\vphi$ as the phase difference between $w_\pm \in \mathbb{R}^2$ and
\begin{equation}
    \label{eq:new-variables}
    a^2 = \frac{1}{4} (w_+^2 +w_-^2),\quad  \eta = \frac{2|w_+||w_-|}{w_+^2 +w_-^2}
\end{equation}
Where $0 \le \eta \le 1$ is dimensionless and denotes the relative magnitudes between the two layer\footnote{$\eta^\star=1$ implies that the two layers have the same magnitude in magnetization $|w_+^\star|=|w_-^\star|$, while $\eta^\star=0$ implies that only one of the layers is magnetized, e.g., $|w_+^\star|>|w_-^\star|=0$.}. Then (see Appendix \eqref{app:MFT-rewrite})
\begin{align}
    \label{eq:free-energy-rewrite}
    F &=  \frac{1}{d\inplane}a^2 - \frac{1}{\beta}\overline{G_J}\\
    \label{eq:free-energy-phase-diff-rewrite}
    G_J &= \ln \int  I_0(2\beta a \sqrt{1+\eta \cos \phi} ) e^{\beta J \cos(\phi+\vphi)}d\phi
\end{align}
Where $I_0$ denotes the modified Bessel function. 

\subsection{Symmetries}
Notice that $F$ is independent of the average phase of $w_{\pm} \in \mathbb{R}^2$ and invariant under $\vphi \mapsto -\vphi$.
Hence, $F$ has an innate $U(1)\times \mathbb{Z}_2$ symmetry. 
A nonzero minimizer $a^\star >0$ denotes a spontaneous symmetry breaking (SSB) in $U(1)$ and corresponds to the SC transition $T_{U(1)} \equiv T_c$.
Similarly, a nonzero $\vphi^\star \ne 0$ denotes SSB in $\mathbb{Z}_2$ and corresponds to the TRSB transition, i.e., $T_{\dZ_2} \equiv T_{\mathrm{TRSB}}$. 
Since the phase difference $\vphi$ is only well-defined when $a>0$, it follows that the critical temperature corresponding to each SSB satisfies $T_{U(1)} \ge T_{\mathbb{Z}_2}$ within mean-field theory. 
It is also worth mentioning that $F$ has an additional $\mathbb{Z}_2$ symmetry regarding exchanging the layers $w_1 \leftrightarrow w_2$, and if broken, would correspond to a minimizer with $\eta^\star \ne 1$.
However, we will show that $\eta^\star =1$ always holds (mathematically, for $\bar{J}=0$ and numerically for $\bar{J}\ne 0$). 
Hence, there is no confusion as to which order parameter $T_{\mathbb{Z}_2}$ refers to.

% ----------------------------------
\section{Perturbative Results}
\label{sec:pert-results}
\begin{figure}[ht]
\label{fig:mft}
\subfloat[\label{fig:mft-disorder}]{%
  \centering
  \includegraphics[width=0.8\columnwidth]{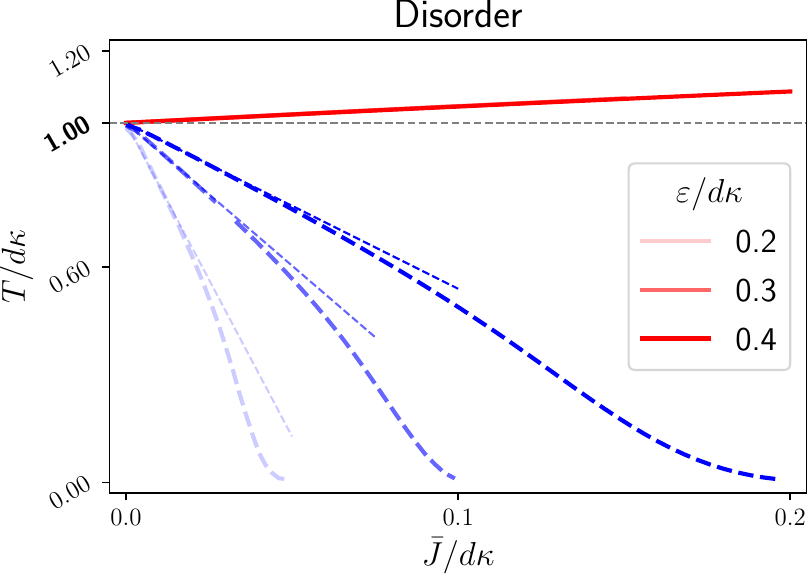}
}
\\
\subfloat[\label{fig:mft-T0}]{%
  \centering
  \includegraphics[width=0.8\columnwidth]{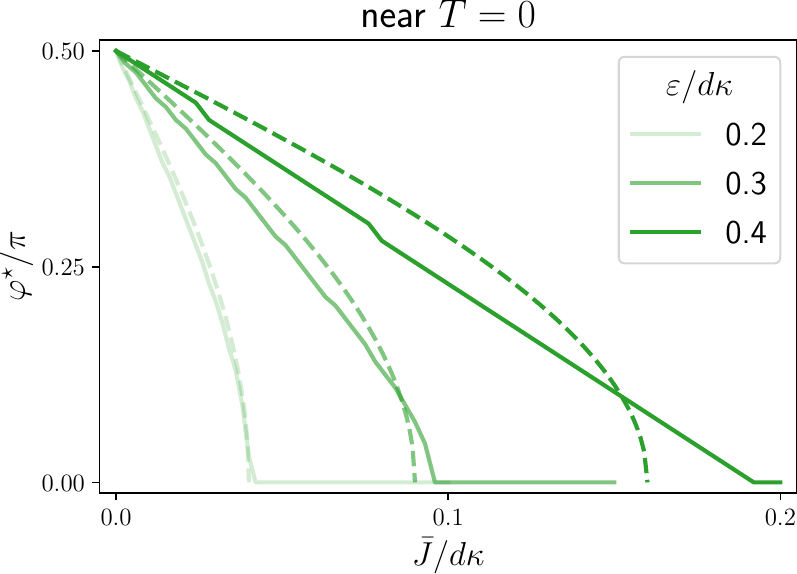}
}
\caption{Mean-Field. The numerical results assumes that the disorder $\delta J$ is of Gaussian distribution. 
This is the only time we assume an explicit distribution of $\delta J$.
(a). The solid/dashed lines represent the BKT/Ising transition temperatures $T_{U(1)},T_{\mathbb{Z}_2}$. The shorter dashed lines denote the linear approximation of $T_{\mathbb{Z}_2}$ with respect to $\bar{J}=\bar{J}_\theta$.
(b). The solid line denotes the minimizing $\vphi^\star$ computed at $\beta d\inplane =100$, while the dashed lines denote the asymptotic limit as $\rms \to 0^+$ given by Eq. \eqref{eq:phi-T0}.}
\end{figure}
\subsection{Near $T_{U(1)}$}
Using $a$ as a small parameter, $F$ can be expanded near $T_{U(1)}$ so that
\begin{align}
     F &= -c_2 a^2 + c_4 \frac{1}{4} a^4 +O(a^6) \\
    c_2 &= \beta \left(1 +\eta  \overline{r_1} \cos \vphi\right) -\frac{1}{d\inplane}\\
    c_4 &=  \beta^3 \left[1-\frac{1}{2}\eta^2 +\eta^2 \overline{r_1^2} \right.\\
    &\quad \left.+2\eta \overline{r_1}\cos \vphi +\eta^2 \left(\overline{r_1^2} -\frac{\overline{r_2}}{2}\right) \cos 2\vphi\right]\nonumber
\end{align}
Where $r_\nu = I_\nu (\beta J)/I_0(\beta J)$ and $I_\nu$ are the modified Bessel functions.

\begin{enumerate}[(i)]
    \item If $\bar{J} >0$ so that $\overline{r_1} >0$, then the quadratic term is first term in the series expansion to contain dependence on $\vphi, \eta$ so that $F$ is minimized by $\eta^\star=1,\vphi^\star = 0$ near $T_{U(1)}$. 
    This indicates a split transition, i.e., $T_{U(1)} > T_{\mathbb{Z}_2}$. 
    \item Conversely, if $\bar{J}=0$ so that $\overline{r_1}=0$, the quartic terms must be considered so that $F$ is minimized\footnote{Notice that $I_1^2 \ge I_0 I_2$ \cite{thiruvenkatachar1951inequalities} and thus $r_1^2 > r_2/2$} by $\eta^\star=1,\vphi^\star = \pm \pi/2$ near $T_{U(1)}$. 
    Physically speaking, the disorder generates an ``effective" $\cos 2\vphi$ term so that $T_{U(1)} = T_{\mathbb{Z}_2}$\footnote{The series expansion indeed proves that TRSB behavior onsets precisely at the superconducting transition, provided that the transition is of second order, i.e., the minimzer $a^\star \to 0$ as $T\to T_{U(1)}$. However, it does not answer whether the TRSB behavior sustains at lower temperatures. Indeed, in general short-range lattice models (e.g., standard XY or Ising model), the Ginibre inequality \cite{ginibre1970general} guarantees that (ferromagnetic) ordering increases as the temperature decreases. However, due to the long-range interactions of the current model, other formal techniques must be developed, which are postponed to Theorem. \eqref{claim:zero-Jbar}.}.
\end{enumerate}

Similar to Ginzburg-Landau, the $U(1)$ transition occurs exactly when the quadratic coefficient $c_2$ transitions from a negative to a positive value, i.e., $c_2 =0$, and thus $T_{U(1)}$ is determined self-consistently via finding the fixed point of 
\begin{equation}
    \hat{T}= 1+\overline{r_1}
\end{equation} 
Where $\hat{\cdots} = \cdots/d\inplane$ is dimensionless\footnote{Notice that in the absence of disorder, $T_{U(1)}=d\inplane$ and thus warrants the definition of dimensionless quantities}.

Numerical calculations also suggest that $\eta^\star =1$ for general $\bar{J}>0$ and temperatures $T$, albeit we do not have a rigorous proof.
Assuming $\eta^\star= 1$, the 2$^\text{nd}$ transition $T_{\mathbb{Z}_2}$ can then be determined exactly via the self-consistently equations (see Appendix \eqref{app:MFT-self-consistency}), i.e.,
\begin{align}
    \label{eq:T-Z2}
    |\hat{w}_{\pm } | = 2\overline{\bra \cos \phi_1\ket_{J}}, \quad 
    \hat{T} = \overline{\bra (\sin \phi_+ -\sin \phi_-)^2\ket_J}
\end{align}
Where $\bra\cdots\ket_J$ denotes the thermal average with respect to the two-site Hamiltonian $H_J$ with $\vphi =0$. Fig. \ref{fig:mft-disorder} shows the 2 transitions (solid and dashed lines) $T_{U(1)},T_{\mathbb{Z}_2}$ as a function of the dimensionless quantities $\hat{\bar{J}}$ and $\hat{\rms}$.
% $\bar{J}/d\inplane$ and $\rms/d\inplane$.
The smaller dashed lines show the linear approximation of the transition temperatures near $\hat{T} \approx 1$ and with $\bar{J} / \rms \to 0$ (see Appendix \eqref{app:MFT-critical-T}), that is,
\begin{align}
    % \hat{T}_{U(1)} &= 1+\hat{\bar{J}} \times \overline{r_1'(\hat{\delta J}})   +O(\hat{\bar{J}}^2) \\
    \hat{T}_{U(1)}&= 1+\frac{1}{2} \hat{\bar{J}} \left[ 1 -\frac{\hat{\rms}^2}{2^2} +O(\hat{\rms}^4) \right]+O(\hat{\bar{J}}^2) \\
    % \hat{T}_{\mathbb{Z}_2} &= 1-\frac{1}{2^2} \hat{\bar{J}} \times \frac{\overline{r_1'(\hat{\delta J})}(1+\overline{r_2(\hat{\delta J})})}{\overline{r_1^2(\hat{\delta J})} -\overline{r_2(\hat{\delta J})}/2}    +O(\hat{\bar{J}}^2) \\
    \hat{T}_{\mathbb{Z}_2}&=1-\hat{\bar{J}} \left[ \frac{2}{3} \frac{1}{\hat{\rms}^2} +O(1) \right] +O(\hat{\bar{J}}^2)
\end{align}
Notice that the slope of $\hat{T}_{\dZ_2}$ with respect to $\bar{J}$ is $\propto 1/\hat{\rms}^2$.
This implies that as the disorder strength $\hat{\rms} \to 0$, the region of $\mathbb{Z}_2$ SSB becomes extremely narrow in $\bar{J}$, despite $T_{\mathbb{Z}_2} = T_{U(1)}$ when $\bar{J}=0$.

\subsection{Near $T =0$ and small disorder $\rms \to 0^+$}
On a first glance of the two-site Hamiltonian $H_J$ in Eq. \eqref{eq:two-site-H}, one would naively attempt to obtain near $T=0$ results by using the nonlinear sigma model (NLSM)\footnote{Approximating $\cos \theta \approx 1-\theta^2 /2$ in the exponent to obtain a Gaussian measure, which in our case, is only valid when $\beta a \gg 1$. } with respect to spins $\sigma_{\pm} \in \mathbb{S}^1$.
However, due to disorder $J$, the expansion near $T=0$ will diverge as $a \to 0^+$ and thus cannot be readily minimized
% \footnote{Contains terms of the form $-1/a$, i.e., $F = \cdots - \text{const}/a$. Taking $a\to 0^+$ would cause $F\to -\infty$ in this expansion, which is inconsistent with the nonlinear sigma model requiring $\beta a \gg 1$.} 
(see Appendix \eqref{app:MFT-nls}). 
However, in the limit of small disorder $\rms \to 0^+$, we can safely assume that the minimizer $a^\star(\rms)$ is sufficiently close to the non-disordered minimizer $a^\star(\rms=0)\ne 0$, and thus the divergence of $F$ near $a\to 0^+$ does not affect us. 
% Similarly, $\eta^\star (\rms \to 0^+)=\eta^*(\rms=0)=1$. 
Therefore, we find that (see Appendix \eqref{app:MFT-nls}) in the limit $\bar{J}/\rms \to 0$,
\begin{equation}
    \label{eq:phi-T0}
    \cos \vphi^\star = \hat{\bar{J}}/\hat{\rms}^2
\end{equation}
In particular, if $\hat{\bar{J}}\le  \hat{\rms}^2$, then the system exhibits $\mathbb{Z}_2$ SSB near $T=0$, where $\vphi^\star = \pm \pi/2$ when $\bar{J}=0$. Indeed, Fig. \ref{fig:mft-T0} compares the asymptotic limit (dashed lines) with the numerical calculations (solid) for different disorder strength $\rms$.

% ----------------------------------
\section{General $T$}
\label{sec:general-T}

We emphasize that beyond special limits, non-perturbative techniques are essential in obtaining exact result and deeper insights to strong interactions.
Indeed, for general temperatures $T$ (and disorder strength $\rms$), one can prove rigorously that\footnote{We note that the formal techniques developed here can also be used to show that the analagous all-to-all model with an inherent non-disordered $J_2 \cos 2\phi$ inter-layer interaction has the same properties. In fact, the proof is much simpler than the disordered case and we suggest that the reader understand the non-disordered problem before moving on to the disordered problem. See Appendix \eqref{app:MFT-rigor} (or more specifically, Theorem \eqref{app-thm:non-disordered}) for details.}
\begin{theorem}[Orientation, see \eqref{app-thm:Z2-orient}]
    \label{claim:pos-Jbar} 
    If $\bar{J} \ge 0$, then $F$ is minimized when $|\vphi^\star| \le \pi/2$. Equivalently, $F$ is minimized when $\bar{J} \cos \vphi^\star \ge 0$.
\end{theorem}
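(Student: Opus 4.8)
The plan is to exploit the reflection $\vphi\mapsto\pi-\vphi$ at fixed magnitudes, together with the evenness of the disorder. Minimizing $F$ over $\vphi$ at fixed $(a,\eta)$ is the same as maximizing $\overline{G_J(\vphi)}$; since $G_J$ is even in $\vphi$ (send $\phi\mapsto-\phi$ in \eqref{eq:free-energy-phase-diff-rewrite}), it is enough to restrict to $\vphi\in[0,\pi]$ and establish
\begin{equation}
\label{eq:plan-reflection}
\overline{G_J(\vphi)}\ \ge\ \overline{G_J(\pi-\vphi)}
\qquad\text{for }\vphi\in[0,\pi/2]\text{ and }\bar J\ge0 ,
\end{equation}
which confines the minimizing $\vphi^\star$ to $[-\pi/2,\pi/2]$ (any $\vphi$ with $\cos\vphi<0$ is then weakly beaten by its reflection $\pi-\vphi$, for which $\cos(\pi-\vphi)\ge0$). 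The same substitution $\phi\mapsto-\phi$ in \eqref{eq:free-energy-phase-diff-rewrite} also yields the identity $G_J(\pi-\vphi)=G_{-J}(\vphi)$, so \eqref{eq:plan-reflection} becomes $\overline{\psi(J)}\ge0$ with
\begin{equation}
\psi(J)\ :=\ G_J(\vphi)-G_{-J}(\vphi),
\end{equation}
which is manifestly odd in $J$ and vanishes at $J=0$.

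The central claim I would prove is a \emph{monotonicity lemma}: for every fixed $a\ge0$, $\eta\in[0,1]$, and $\vphi\in[0,\pi/2]$, the function $J\mapsto\psi(J)$ is non-decreasing on $\mathbb{R}$. Granting this, the theorem follows from the symmetry of $\delta J$ alone: pairing the realizations $\delta J$ and $-\delta J$ and then using oddness of $\psi$,
\begin{equation}
\overline{\psi(J)}
=\overline{\tfrac12\big[\psi(\bar J+\delta J)+\psi(\bar J-\delta J)\big]}
=\overline{\tfrac12\big[\psi(\bar J+\delta J)-\psi(\delta J-\bar J)\big]}
\ \ge\ 0 ,
\end{equation}
since $\bar J\ge0$ forces $\bar J+\delta J\ge\delta J-\bar J$ pointwise and $\psi$ is non-decreasing. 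Because only the symmetry of $\delta J$ enters, this is exactly why the conclusion holds ``regardless of the specific form of the disorder''.

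The monotonicity lemma is the main obstacle. Differentiating, $\psi'(J)=\beta\big(\langle\sigma_+\!\cdot\!\sigma_-\rangle_J+\langle\sigma_+\!\cdot\!\sigma_-\rangle_{-J}\big)$, with the thermal averages taken in the two-site model $H_J$ of \eqref{eq:two-site-H} (external fields $w_\pm$ at relative angle $\vphi$); so the lemma is equivalent to the correlation inequality $\langle\sigma_+\!\cdot\!\sigma_-\rangle_J+\langle\sigma_+\!\cdot\!\sigma_-\rangle_{-J}\ge0$ whenever $|\vphi|\le\pi/2$. The gauge map $\sigma_-\mapsto-\sigma_-$ (which sends $J\mapsto-J$ and $w_-\mapsto-w_-$) rewrites the left-hand side as $\langle\sigma_+\!\cdot\!\sigma_-\rangle^{(w_+,w_-)}_J-\langle\sigma_+\!\cdot\!\sigma_-\rangle^{(w_+,-w_-)}_J$, i.e.\ it is the assertion that the inter-layer bond correlation is at least as large when $w_-$ makes an acute angle with $w_+$ as when it makes an obtuse one --- transparent physically (a ferromagnetic bond favors aligned fields, an antiferromagnetic bond anti-aligned ones) but in need of an honest argument.

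I would try to settle this in one of two ways. First, via a Ginibre-type correlation inequality for plane rotators, reducing the frustration-free configurations directly and treating the complementary ones by monotonicity in the field magnitudes. Alternatively, directly from \eqref{eq:free-energy-phase-diff-rewrite}: after $\phi\mapsto\phi+\pi$ in the $G_{-J}$ factor one finds $\psi'(J)\propto\mathrm{Cov}_\nu\!\big(\cos(\phi+\vphi),\,\rho(\phi)\big)$, where $\nu$ is the probability measure $\propto I_0(2\beta a\sqrt{1-\eta\cos\phi})\,e^{\beta J\cos(\phi+\vphi)}\,d\phi$ on the circle and $\rho(\phi)$ is the ratio of $I_0(2\beta a\sqrt{1+\eta\cos\phi})$ to $I_0(2\beta a\sqrt{1-\eta\cos\phi})$, an increasing function of $\cos\phi$; the covariance is then nonnegative by a Chebyshev/FKG rearrangement exploiting the comonotonicity of $\cos(\phi+\vphi)$ and $\cos\phi$ on the relevant half-circle, which is precisely where the hypothesis $|\vphi|\le\pi/2$ (equivalently $\cos\vphi\ge0$) is consumed. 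Making this covariance inequality rigorous is the heart of the proof; the reduction in the first two paragraphs is then routine bookkeeping and the standard two-replica pairing.
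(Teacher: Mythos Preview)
Your reduction is correct and is essentially the same as the paper's. The identity $G_J(\pi-\vphi)=G_{-J}(\vphi)$ and the pairing of $\delta J$ with $-\delta J$ are exactly what the paper uses, and your monotonicity lemma --- that $\psi(J)=G_J(\vphi)-G_{-J}(\vphi)=\ln(K_J/K_{-J})$ is non-decreasing for $|\vphi|\le\pi/2$ --- is equivalent to the paper's Case~(2) statement that $p\mapsto (K_p-K_{-p})/(K_p+K_{-p})$ is increasing (both say $K_p/K_{-p}$ is increasing; the paper's $\dot K_pK_{-p}+K_p\dot K_{-p}$ is just $K_pK_{-p}\,\psi'(p)$). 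Your packaging is in fact tidier: the paper splits into Case~(1) $\bar J\ge\delta J\ge0$ and Case~(2) $\delta J\ge\bar J\ge0$, but Case~(1) is the $q=0$ instance of Case~(2), so your single monotonicity statement subsumes both.

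The genuine gap is the proof of the monotonicity lemma itself, which you correctly flag as ``the heart of the proof'' but do not supply. Your Chebyshev/FKG route faces a concrete obstacle: $\cos\phi$ and $\cos(\phi+\vphi)$ are \emph{not} comonotone on the full circle (for $0<\vphi<\pi/2$ they move oppositely on the arcs $(-\vphi,0)$ and $(\pi-\vphi,\pi)$), and the measure $\nu\propto I_0(2\beta a\sqrt{1-\eta\cos\phi})\,e^{\beta J\cos(\phi+\vphi)}$ carries no reflection symmetry that lets you fold to a half-circle where comonotonicity holds --- so a one-line rearrangement does not close the argument. The Ginibre suggestion is too vague to assess; the two-field comparison $\langle\sigma_+\!\cdot\!\sigma_-\rangle^{(w_+,w_-)}_J\ge\langle\sigma_+\!\cdot\!\sigma_-\rangle^{(w_+,-w_-)}_J$ is not a standard GKS/Ginibre consequence because it varies the \emph{direction} of a field rather than a ferromagnetic coupling strength. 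The paper instead proves $\psi'\ge0$ by an explicit sign-decomposition: write $I_0(\sqrt{1+\eta\cos(\theta-\vphi)})$ and the remaining factors as ``conic combinations'' (nonnegative power series) in the sign variables $\xi_\chi,\eta_\chi,\xi_\delta,\eta_\delta\in\{\pm1\}$ of $\cos$ and $\sin$ of the half-sum/half-difference angles, then sum over all sign choices; by the elementary Lemma that $\sum_{x,y,\ldots=\pm1}\{x,y,\ldots\}\ge0$, the whole expression reduces to $\xi_\vphi$ times a nonnegative quantity. That combinatorial device is the missing ingredient in your proposal.
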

\begin{theorem}[TRSB, see \eqref{app-thm:TRSB}]
    \label{claim:zero-Jbar} 
    If $\bar{J} = 0$, then $F$ is minimized via $\vphi^\star =\pm \pi/2$ and $\eta^\star =1$
\end{theorem}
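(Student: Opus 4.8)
\emph{Proof strategy.} The plan is to treat $F$ in the form \eqref{eq:free-energy-rewrite}--\eqref{eq:free-energy-phase-diff-rewrite} as a function $F(a,\eta,\vphi)$ of $a\ge 0$, $\eta\in[0,1]$ and the relative phase $\vphi$, and to locate its global minimizer. First I would record the reductions. For $\bar J=0$ the quadratic Landau coefficient is $c_2=\beta-1/(d\inplane)$ (the $\overline{r_1}$ term drops since $I_1$ is odd and $\delta J$ even) and $c_4>0$ throughout, so a standard argument gives $T_{U(1)}=d\inplane$ and $a^\star>0$ strictly for $T<T_{U(1)}$; we work in this regime, where $\vphi^\star,\eta^\star$ are meaningful. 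Next, since $\delta J$ is even, $\overline{G_J}$ is even in $\vphi$ and $\pi$-periodic in $\vphi$ — the latter because the shift $\phi\mapsto\phi+\pi$ in \eqref{eq:free-energy-phase-diff-rewrite} turns $\vphi\mapsto\vphi+\pi$ into $J\mapsto-J$. Hence $\overline{G_J}$ is symmetric about $\vphi=\pi/2$ as well, the points $\vphi=0$ and $\vphi=\pi/2$ are automatically stationary, and it suffices (i) to show $F(a,\eta,\vphi)\ge F(a,\eta,\pi/2)$ for all $\vphi$ at fixed $a,\eta$, and then (ii) to optimize $\eta$ at $\vphi=\pi/2$.

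Step (i) — pinning $\vphi^\star=\pm\pi/2$ — is the heart of the matter. Writing $Z_J(\vphi)=e^{G_J(\vphi)}$ and pairing $J$ with $-J$ (legitimate since $\delta J$ is even and $Z_{-J}(\vphi)=Z_J(\vphi+\pi)$) gives the symmetric representation $\overline{G_J}(\vphi)=\tfrac12\,\overline{\ln P_J(\vphi)}$ with $P_J(\vphi):=Z_J(\vphi)\,Z_J(\vphi+\pi)$, where $P_J$ is even and $\pi$-periodic in $\vphi$, hence a genuine function of $\cos 2\vphi$ for each fixed $J$, with $P_J(\pi/2)=Z_J(\pi/2)^2$. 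The plan is to prove that $P_J$ is a non-increasing function of $\cos 2\vphi$ (equivalently non-decreasing on $\vphi\in[0,\pi/2]$) for \emph{every} realization $J$; this at once forces $P_J(\vphi)\le P_J(\pi/2)$, hence $\overline{G_J}(\vphi)\le\overline{G_J}(\pi/2)$, which is exactly (i). The picture behind this sub-lemma is clean: the Jacobi--Anger expansion of \eqref{eq:two-site-H} gives $Z_J(\vphi)=4\pi^2\sum_{n}I_n(\beta J)\,I_n(\beta|w_+|)\,I_n(\beta|w_-|)\,e^{in\vphi}$, so for $J>0$ the function $Z_J(\vphi)$ has non-negative Fourier coefficients — a ``bump centred at $\vphi=0$'' — while $Z_J(\vphi+\pi)$ is the same bump centred at $\vphi=\pi$, and the product of such a pair is maximized at the midpoint $\vphi=\pi/2$. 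Turning this heuristic into the stated monotonicity of $P_J$ in $\cos 2\vphi$ is the main obstacle; as a sanity check it reduces, on expanding to quartic order in $a$, to the pointwise Bessel inequality $r_1^2>r_2/2$ already used in Sec.~\eqref{sec:pert-results}, and if a clean global version proves elusive a fallback is to show that $\vphi=0$ is a strict local maximum of $F$ (a second-derivative estimate) together with the absence of any other stationary $\vphi$ in $(0,\pi/2)$.

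For step (ii), assume $\vphi=\pi/2$, so that $\cos(\phi+\vphi)=-\sin\phi$. Differentiating \eqref{eq:free-energy-phase-diff-rewrite} in $\eta$ gives $\partial_\eta Z_J\big|_{\vphi=\pi/2}=\beta a\int_0^{2\pi}\cos\phi\;m(\cos\phi)\;e^{-\beta J\sin\phi}\,d\phi$, with $m(x):=I_1\!\big(2\beta a\sqrt{1+\eta x}\big)/\sqrt{1+\eta x}=2\beta a\,I_1(y)/y$ for $y=2\beta a\sqrt{1+\eta x}$. Since $I_1(y)/y$ is a power series in $y^2$ with positive coefficients, $m$ is positive and strictly increasing in $x$; symmetrizing the integrand under $\phi\mapsto\pi-\phi$, which fixes $\sin\phi$ while sending $\cos\phi\mapsto-\cos\phi$, yields $\partial_\eta Z_J\big|_{\vphi=\pi/2}=\tfrac{\beta a}{2}\int_0^{2\pi}\cos\phi\,[\,m(\cos\phi)-m(-\cos\phi)\,]\,e^{-\beta J\sin\phi}\,d\phi\ge 0$, strictly positive on a set of positive measure when $a,\eta>0$. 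Hence $\partial_\eta\overline{G_J}\big|_{\vphi=\pi/2}>0$, so at $\vphi=\pi/2$ the free energy $F=a^2/(d\inplane)-\overline{G_J}/\beta$ is strictly decreasing in $\eta$ and is minimized at $\eta^\star=1$. Combining (i) and (ii), the global minimizer of $F$ has $\vphi^\star=\pm\pi/2$ and $\eta^\star=1$. (The same $\phi\mapsto\pi-\phi$ reflection, applied to the all-to-all model with an inherent $J_2\cos 2\phi$ interaction — whose Josephson weight is already invariant under this reflection — delivers both conclusions simultaneously and is the easier warm-up of Appendix~\eqref{app:MFT-rigor}, Theorem~\eqref{app-thm:non-disordered}.)
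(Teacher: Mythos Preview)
Your overall architecture matches the paper's: pair $J$ with $-J$, reduce to monotonicity of $P_J(\vphi)=Z_J(\vphi)Z_{-J}(\vphi)$ on $[0,\pi/2]$, then optimize $\eta$ at $\vphi=\pi/2$. Your Step~(ii) is correct and in fact cleaner than the paper's version: the $\phi\mapsto\pi-\phi$ symmetrization together with the monotonicity of $I_1(y)/y$ gives $\partial_\eta Z_J\big|_{\vphi=\pi/2}\ge 0$ for \emph{every} $J$, which immediately yields $\partial_\eta\overline{G_J}\ge 0$. The paper instead differentiates the product $K_{\delta J}K_{-\delta J}$ and runs its conic-combination machinery, so here you actually improve on the exposition.

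The genuine gap is Step~(i). You correctly flag the monotonicity of $P_J$ as ``the main obstacle'' but do not prove it, and your Jacobi--Anger heuristic is not merely incomplete---it is false as stated. The claim that a positive-Fourier-coefficient ``bump'' $f(\vphi)=\sum_n a_n e^{in\vphi}$ with $a_n\ge 0$ must have $f(\vphi)f(\vphi+\pi)$ maximized at the midpoint $\vphi=\pi/2$ fails already for $f(\vphi)=2+\cos\vphi+\cos 2\vphi>0$: writing $u=\cos 2\vphi$ one gets $P(\vphi)=(2+u)^2-(1+u)/2$, which is strictly increasing in $u$ on $[-1,1]$ and hence maximized at $\vphi=0$, not $\pi/2$. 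So positivity of the Fourier coefficients $I_n(\beta J)I_n(\beta|w_+|)I_n(\beta|w_-|)$ is not by itself enough; some finer structural input from the Bessel functions is required. Your proposed fallback (local maximum at $\vphi=0$ plus absence of interior critical points) is also not carried out and would need its own argument.

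The paper closes this gap by an entirely different device: it writes the derivative $\partial_\vphi(K_{\delta J}K_{-\delta J})$ as a double integral, passes to sum-and-difference angles $\chi,\delta$, and decomposes every factor as a \emph{conic combination} (power series with nonnegative coefficients) in the sign variables $\xi_\chi,\eta_\chi,\xi_\delta,\eta_\delta\in\{\pm1\}$ of $\cos\chi,\sin\chi,\cos\delta,\sin\delta$. A key technical lemma expresses $I_0(h_1)I_0(h_2)$ (with $h_i=\sqrt{1+\eta\cos(\theta_i-\vphi)}$) as such a conic combination in the products $\xi_\delta\xi_\chi\xi_\vphi$, $\xi_\delta\eta_\chi\eta_\vphi$, $\xi_\chi\eta_\chi\xi_\vphi\eta_\vphi$. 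After a bijective relabeling $\xi_\chi\mapsto\xi_\chi\xi_\vphi$, $\eta_\chi\mapsto\eta_\chi\eta_\vphi$, the $\vphi$-dependence factors out as $\xi_\vphi\eta_\vphi$ times a sum over $\{\pm1\}$ of a conic expression, which is nonnegative by the elementary observation $\sum_{x=\pm1}x^k\ge 0$. This yields $\mathrm{sgn}\,\partial_\vphi P_J=\mathrm{sgn}(\cos\vphi\sin\vphi)$, exactly the monotonicity you need. If you want to repair your proof, this sign-decomposition technique (or an equivalent correlation inequality exploiting the specific Bessel structure, not just Fourier positivity) is what has to replace the bump heuristic.
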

% \begin{enumerate}
%     \item If $\bar{J} > 0$, then $F$ is minimized via $|\vphi^\star| \le \pi/2$
%     \item \label{claim:zero-Jbar} If $\bar{J} = 0$, then $F$ is minimized via $\vphi^\star =\pm \pi/2$ and $\eta^\star =1$
% \end{enumerate}
Where the proof only relies on the probability distribution of $\delta J$ being even (see Appendix \eqref{app:MFT-rigor} for details).
Due to the definitions of $a,\vphi, \eta$, it should be noted that $\vphi,\eta$ are only well-defined when $a>0$, and thus the previous theorems implicitly assume that $a>0$.

Theorem \eqref{claim:pos-Jbar} is the formal statement which confirms the physical expectation that the system wishes to minimize the Josephson coupling in Eq. \eqref{eq:H-MF}.
More specifically, let us naively replace the disorder term in Eq. \eqref{eq:H-MF} with its average value $J(\r)\mapsto \bar{J}$ so that the average value generates a 1\ts{st} order Josephson $J_1 \sim \bar{J}$ coupling. 
On the other hand, the disorder $\delta J(\r)$ is expected to generate an ``effective" 2\ts{nd} order Josephson coupling term $J_2$ with a sign that favors a phase difference $\phi = \pm \pi/2$ (as described in previous work \cite{yuan2023inhomogeneity} near zero temperature).
In this naive setup, the interaction between the two layers is described by a competition between the 1\ts{st} and 2\ts{nd} order Josephson coupling, schematically written as $-\bar{J} \cos \phi + J_2\cos 2\phi$.
If $J_1 > 0$ ($<0$, respectively), then we would expect the ground state phase difference to be $|\phi|\le \pi/2$ ($\ge \pi/2$).
% One should regard theorem \eqref{claim:pos-Jbar} as the formal statement of the described physical expectation.
% Although a similar claim can be argued via a Ginzburg-Landau effective field theory in the weak disorder limit $\rms \to 0$, the techniques developed here formalizes and extends the statement to treat arbitrarily large disorder.

More importantly, theorem \eqref{claim:zero-Jbar} proves the existence of a TRSB phase generated by disorder, indicating that the subtle logarithm divergence seen in Ref. \cite{yuan2023inhomogeneity} may not affect the overall conclusion.
In fact, the TRSB phase extends up to the superconducting transition, which suggests the possibility of disorder induced high $T_c$ topological superconductivity.

% \subsubsection{Sketch of Proof}
% For simplicity, let us consider the case where $\bar{J}=0$ and $\delta J$ is uniformly chosen\footnote{Technically, we would need $\pm \rms/\sqrt{2}$ so that $\overline{(\delta J)^2}=\rms^2$, but the overall constant does not change the argument} between $\pm \rms$. Then it's clear that
% \begin{align}
%     \overline{G_J} &= \frac{1}{2} \left( G_{+\rms} +G_{-\rms} \right)\\
%     &= \frac{1}{2} \ln \iint \prod_{l=\{1,2\}}d\sigma^{(1)}_l d\sigma^{(2)}_l e^{-\beta H_{+\rms} (\sigma^{(1)}_l) -\beta H_{-\rms} (\sigma^{(2)}_l)} 
% \end{align}

% Therefore, $-\overline{G_J}$ (up to multiplicative constants) is the free energy of the 2-replica Hamiltonian $H_{+ \rms} (\sigma^{(1)}_l) +H_{-\rms} (\sigma^{(2)}_l)$. 
% Unlike the replica ``trick" where each replica is under the same disorder realization, the 2 replicas above are coupled together by the different realizations of disorder $\delta J =\pm \rms$, i.e., replica 1 $\sigma^{(1)}_l$ has interaction $\delta J =+\rms$ and replica 2 $\sigma^{(2)}_l$ has $\delta J = -\rms$.

% The symmetries of the 2-replica system becomes apparent after a change of variables and rewriting the 2-replica system in terms of the phase average/difference of the 2 replicas.
% By utilizing the symmetries of the 2-replica system, one can then prove claim \eqref{claim:zero-Jbar}, while a similar approach can be used to prove claim \eqref{claim:pos-Jbar}. 
% The generalization to even probability distributions of $\delta J$ is straightforward.

% --------------------------
\section{Summary and Discussion}

The paper presents a model that is exactly solvable for all-to-all interactions. 
Notably, it introduces rigorous methods to demonstrate the presence of a disorder-induced phase with broken time-reversal symmetry (TRSB)\footnote{
The physical interpretation of this result, as discussed previously in section \eqref{sec:general-T}, can be seen in the following manner. 
The disordered average $\bar{J}$ generates a 1\ts{st} order Josephson coupling $J_1\sim \bar{J}$, while the disorder term $\delta J(\r)$ is expected to generate an ``effective" 2\ts{nd} order Josephson coupling $J_2 \sim \rms^2$ with a sign that favors a phase difference $\phi = \pm \pi/2$ (as described in previous work \cite{yuan2023inhomogeneity} near zero temperature). 
Taken together, the interlayer interaction is schematically related to the non-disorder problem, i.e., $-J_1 \cos \phi +J_2 \cos 2\phi$.
Note that within the context of Ginzburg Landau theory, where $\psi_\pm \in \dC$ are the Landau order parameters for each layer $l=\pm$ \cite{can2021high}, only the quadratic terms determine the critical transition temperature. 
Since the $J_1$ term is quadratic (schematically written as $\psi_+^\dagger \psi_- +\text{h.c.}$) and the $J_2$ term is quartic (i.e., $(\psi_+^\dagger \psi_-)^2 +\text{h.c.}$), we see that the two transitions must coincide at $J_1=0$, and split when $J_1\ne 0$.
}.
When $\bar{J}=0$, we prove that the TRSB phase persists up to the superconducting transition temperature ($T_{\mathrm{TRSB}} = T_c$), independent of the specific disorder characteristics. 
The same techniques are applied to analyze the non-disordered (inherent $J_2$) problem with vanishing 1\ts{st} order Josephson coupling $J_1=0$ \cite{can2021high}, yielding analogous conclusions (see Theorem \eqref{app-thm:non-disordered}). 
Consequently, the developed framework could serve as a formal approach to validate results anticipated by Ginzburg-Landau theory in closely related scenarios, such as those referenced in \cite{bojesen2014phase, maccari2022effects}, and possibly in a more general setting as well.

We also note that while twisted bilayer superconductors \cite{XueTwist} were the motivation for the proposed exactly solvable model, similar constructions can potentially be applied in a boarder context, since the model is not concerned with the microscopic details.
Indeed, effective Josephson couplings incuded by frustration from three (or more) coupled bands \cite{lee2009pairing,agterberg1999conventional,ng2009broken,stanev2010three,bojesen2014phase,maccari2022effects}, inter-band scattering within a 2-band system \cite{bobkov2011time} or boundary effects \cite{bahcall1996boundary,rainer1998andreev} have been previously proposed as possible mechanisms of generating TRSB behavior.
% It would then be expected to see that disorder can induce similar phenomena within such proposals, as shown here for the bilayer system.
Consequently, one would anticipate that disorder can induce similar phenomena within such proposals, as demonstrated in this study focusing on the bilayer system.

It should be noted that the mean-field model is only quantitatively reliable when the effective number of neighboring spins is large as occurs when there are long-range interactions or in high dimensions ($d\to \infty$) \cite{thompson1974ising}.
In low dimensions ($d=2,3$),  fluctuations due to short-range interactions may induce distinct statistical behavior, e.g., such as the possibility of a \textit{vestigial} TRSB phase, i.e., $T_{\mathrm{TRSB}} > T_c$ \cite{yuan2023fraun}.
Indeed, a similar possibility was suggested for the non-disorder inherent $J_2$ problem by applying an RG scheme \cite{zeng2021phase}, though further research (both numerical \cite{song2022phase} and mathematical \cite{yuan2023vestigial}) seems to suggest otherwise.

Another worry would be due to the Imry-Ma argument \cite{imry1975random}, which precludes long-range ordering of a continuous symmetry in the presence of disorder in low dimensions.
However, the Imry-Ma argument considers the scenario where the disorder has the same degree of freedom as the spins (e.g., if $\sigma \in \dS^1$, then so is the disorder field), while the disorder considered in this paper is ``effectively" within a hyper-plane of the continuous $U(1)$ symmetry, i.e., the disorder term $J(\r)$ is coupled to the phase difference $\phi$, and acts along the $x$-axis via the interaction $J(\r)\cos \phi$, while $\phi$ is in $\dS^1\cong U(1)$.
The difference can result in distinct statistical behavior as discussed in Ref. \cite{crawford2011random,crawford2013random}, in which it was proved mathematically that the system exhibits long-range order for sufficiently low temperatures and weak disorder.
This provides confidence that the certain aspects of the mean-field phenomena will extend to low dimensions (e.g., the existence of TRSB), circumventing the standard Imry-Ma argument.

% -----------------------
\section{Acknowledegements}

I am grateful for Steve A. Kivelson's support and generosity during this project and also for providing extensive comments and suggestions on the draft.  This work was supported, in part, by NSF Grant No. DMR-2000987 at Stanford University. 
% -----------------------
% \bibliography{refs.bib}
\bibliography{main.bbl}

\appendix
\onecolumngrid

% ---------------------------------------
\section{Hubbard-Stratonovich Transform and the Free Energy}
\label{app:MFT-HS}

In our main text, we assumed that $J(\r)$ are iid (independent identically distributed) even random variables with mean $\bar{J}$ and variance $\overline{(\delta J)^2}$.
However, since our results will not depend on the specific form of the distribution, in this section, it's instructive to consider the case where $\delta J$ is $\delta$-distributed, that is, $\delta J= \pm \rms$\footnote{
Notice that in the main text, we used $\rms^2$ to denote the disorder strength $\overline{(\delta J)^2}$. 
Therefore, if we wish to be consistent, we would require that $\delta J = \pm \rms/\sqrt{2}$. 
However, since this is a constant factor which will not affect our analysis, we shall adopt the simpler notation and omit the $1/\sqrt{2}$.} 
where $\rms >0$ and the $\pm$ sign is chosen randomly at each lattice site $\r$. 
The generalization to even probability distributions $\dP$ (i.e., $\dP[\delta J]=\dP[-\delta J]$ for all values of $\delta J$) is straightforward and we will comment on the extension in end of this section, i.e., Appendix \eqref{app:MFT-HS-extension}.
To simplify notation, we will also use $\mathbb{E}_{\delta J}[\cdots]$ to denote averaging over the disorder $\delta J$.

The Hubbard-Stratonovich (HS) transform is an exact transformation using the following fact \cite{velenik}
\begin{equation}
    \exp{(ax^2)} = \frac{1}{\sqrt{\pi a}} \int_\mathbb{R} \exp{\left(-\frac{y^2}{a} +2 xy\right)} dy
\end{equation}
In particular, we find that
\begin{equation}
    \exp \left(\frac{\beta d\inplane}{V} \left( \sum_{\r} \sigma_l (\r) \right)^2  \right) = \frac{\beta V}{4\pi d\inplane} \int_{\mathbb{R}^2} dw_l \exp \left[ -\frac{\beta V}{4 d\inplane} w_l^2 +\beta w_l \cdot \sum_{\r} \sigma_l(\r) \right]
\end{equation}
Notice that by using the HS transform, we obtain a term linear in $\sigma_l (\r)$, which implies that the exponential term can be decomposed into the product of local functions $\prod_{\r} f_l(\sigma_l(\r))$, each depending on the spin configuration $\bm{\sigma}_l \equiv (\sigma_l(\r):\r\in V)$ implicitly through $\sigma_l(\r)$ at lattice site $\r$. 
Since we intend to integrate over all spin configurations, i.e., $\sigma_l(\r) \in \dS^1$ for all $\r$, we can interchange the order of product and integration, i.e., schematically,
\begin{equation}
    \int_{(\dS^1)^V}d\bm{\sigma}_l \prod_{\r} f_l (\sigma_l (\r)) = \left[\int_{\dS^1} d\sigma_l f(\sigma_l)\right]^V, \quad d\bm{\sigma}_l \equiv \prod_{\r} d\sigma_l(\r)
\end{equation}
% and thus can be decomposed when computing the partition function, . 
More specifically, for a given disorder realization $\bm{J}\equiv (J(\r):\r\in V)$,
\begin{align}
    \label{eq:partition-with-spin}
    Z_{V,\bm{J}} = \left( \frac{\beta V}{4\pi d\inplane}\right)^2 \iint_{\mathbb{R}^2\times \mathbb{R}^2} dw_\pm \exp \left[ -\frac{\beta V}{4 d\inplane} (w_+^2 +w_-^2)\right] \prod_{\r} \iint_{\mathbb{S}^1\times \mathbb{S}^1} d\sigma_\pm \exp \left[ \beta \sum_l w_l \cdot \sigma_l +J(\r) \sigma_+ \cdot \sigma_- \right]
\end{align}
By using the HS transform, the lattice site $\r$ dependence of the spins $\sigma_l$ inside the integral disappears and thus what remains is the possible $\r$ dependence of the inter-layer interaction $J(\r)$. 
Let $V_\pm$ denote the number of lattice sites $\r$ with $\delta J(\r) = \pm \rms$ so that $V_+ +V_-=V$ and let $\rho(\delta \bm{J})=V_+/V$ so that
\begin{align}
    \label{app-eq:delta-G}
    Z_{V,\bm{J}} &= \left( \frac{\beta V}{4\pi d\inplane}\right)^2 \iint_{\mathbb{R}^2\times \mathbb{R}^2} dw_\pm \exp \left[ -\frac{\beta V}{4 d\inplane} (w_+^2 +w_-^2) +V_+ G_{\bar{J}+\rms}(w_+,w_- ) +V_- G_{\bar{J}-\rms}(w_+,w_- ))\right] \\
    &=\left( \frac{\beta V}{4\pi d\inplane}\right)^2 \iint_{\mathbb{R}^2\times \mathbb{R}^2} dw_\pm \exp \left[ -\beta V \psi_{\bar{J},\rho(\delta \bm{J})}(w_+,w_-)\right]
\end{align}
Where $G_{J=\bar{J}\pm \rms}$ is defined using the two-site Hamiltonian in Eq. \eqref{eq:two-site-H} and
\begin{align}
    \psi_{\bar{J},\rho(\delta \bm{J})}(w_+,w_- ) = \frac{1}{4d\kappa} (w_+^2 +w_-^2) -\frac{1}{\beta} \left(\rho G_{\bar{J}+\rms} +(1-\rho) G_{\bar{J}-\rms} \right)
\end{align}
Therefore, we see that the partition function $Z_{V,\bm{J}}$ only depends on $\bm{J}=(J(\r):\r\in V)$ implicitly via the average value $\bar{J}$ and $\rho(\delta \bm{J})$ and thus warrants the notation $Z_{V,\bm{J}} = Z_{V,\bar{J},\rho(\delta \bm{J})}$. Notice that $\rho(\delta \bm{J})$ can be rewritten as
\begin{equation}
    \rho(\delta \bm{J}) = \frac{1}{V} \sum_{\r} \frac{\delta J(\r) +\rms}{2\rms}
\end{equation}
By the central limit theorem, $\rho (\delta J) \to \mathbb{E}_{\delta \bm{J}} [\rho (\delta \bm{J})] = 1/2$ as $V\to \infty$. 
Therefore, we will consider the partition function $Z_{V,\bar{J},\rho=1/2}$ defined with $\psi_{\bar{J},\rho=1/2}$, and show that the free energies of the 2 partition functions must converge in the thermodynamic limit $V\to \infty$. 
More specifically, we shall prove the following.

% Notice that $V \psi_{\bar{J},\rho(\delta \bm{J})}$ act as a ``Hamiltonian'' for $w_\pm \in \mathbb{R}^2$. 
% Indeed, it's not hard to check that
% \begin{align}
%     \left\bra \frac{1}{V} \sum_{\r}  \sigma_+ (\r) \right\ket_{V,\bm{J}} &= \frac{\beta}{2 d \inplane} \bra w_+ \ket_{V\psi_{\bar{J},\rho(\delta \bm{J})}}
% \end{align}
% Where the right-hand-side the average with respect to the measure $\propto e^{-\beta V \psi}$.
\begin{theorem}[Free Energy]
    \label{app-thm:free-energy}
    \begin{equation}
        \lim_{V\to\infty} \mathbb{E}_{\delta \bm{J}} \frac{-1}{\beta V} \log Z_{V,\bar{J},\rho(\delta \bm{J})} = \lim_{V\to\infty} \frac{-1}{\beta V} \log Z_{V,\bar{J},\rho=1/2} = \min_{w_\pm \in \mathbb{R}^2} \psi_{\bar{J},\rho=1/2}(w_+,w_-)
    \end{equation}
\end{theorem}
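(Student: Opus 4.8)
The plan is to combine a Laplace (saddle-point) asymptotic at the fixed value $\rho=1/2$ with a concentration estimate that replaces the quenched average of $f_V(\rho(\delta\bm{J}))$ by $f_V(1/2)$, where I abbreviate $f_V(\rho):=-\tfrac1{\beta V}\log Z_{V,\bar J,\rho}$ and $f_\infty(\rho):=\min_{w_\pm\in\dR^2}\psi_{\bar J,\rho}(w_+,w_-)$; the asserted chain of equalities then reduces to $f_V(1/2)\to f_\infty(1/2)$ and $\mathbb{E}_{\delta\bm{J}}f_V(\rho(\delta\bm{J}))\to f_\infty(1/2)$. For the first, I would apply the classical Laplace principle to $Z_{V,\bar J,1/2}=(\tfrac{\beta V}{4\pi d\inplane})^2\iint_{\dR^2\times\dR^2}dw_\pm\,e^{-\beta V\psi_{\bar J,1/2}(w_+,w_-)}$. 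The two inputs are continuity and coercivity of $\psi_{\bar J,1/2}$: continuity of $w\mapsto G_J(w)$ is dominated convergence (the integrand in Eq.~\eqref{eq:two-site-H} is bounded, positive, and integrated over the compact torus), and since $-H_J\le|w_+|+|w_-|+|J|$ the function $G_J$ grows at most linearly in $|w_\pm|$, so the quadratic term dominates and $\psi_{\bar J,1/2}(w)\ge\tfrac1{8d\inplane}(w_+^2+w_-^2)-C$ for some constant $C$. These give a minimizer $w^\star$; lower-bounding $\iint e^{-\beta V\psi}$ by its mass on a small ball about $w^\star$ yields $\limsup_V f_V(1/2)\le\psi_{\min}$, and $\iint e^{-\beta V\psi}\le e^{-\beta V\psi_{\min}}\iint e^{-\beta(\psi-\psi_{\min})}<\infty$ (valid for $V\ge1$, the last integral finite by coercivity) yields $\liminf_V f_V(1/2)\ge\psi_{\min}$. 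The polynomial prefactor contributes only $-\tfrac2{\beta V}\log(\tfrac{\beta V}{4\pi d\inplane})\to0$.

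For the replacement of $\rho(\delta\bm{J})$ by $1/2$, the key estimate is that $f_V$ is Lipschitz in $\rho$ with a $V$-independent constant. From $e^{-\beta H_{J_1}}=e^{-\beta H_{J_2}}e^{\beta(J_1-J_2)\,\sigma_+\cdot\sigma_-}$ and $|\sigma_+\cdot\sigma_-|\le1$, the ratio of the integrals defining $G_{J_1}(w)$ and $G_{J_2}(w)$ lies in $[e^{-\beta|J_1-J_2|},e^{\beta|J_1-J_2|}]$, so $|G_{J_1}(w)-G_{J_2}(w)|\le\beta|J_1-J_2|$ \emph{uniformly in} $w_\pm$. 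Since $\psi_{\bar J,\rho}$ is affine in $\rho$ with $\partial_\rho\psi_{\bar J,\rho}=-\tfrac1\beta(G_{\bar J+\rms}-G_{\bar J-\rms})$, this gives $|\psi_{\bar J,\rho}(w)-\psi_{\bar J,\rho'}(w)|\le2\rms|\rho-\rho'|$ pointwise in $w_\pm$; a pointwise shift of the exponent $-\beta V\psi$ by at most $2\rms\beta V|\rho-\rho'|$ shifts $-\tfrac1{\beta V}\log$ of the integral, and likewise the minimum over $w_\pm$, by at most $2\rms|\rho-\rho'|$. Hence $f_V$ and $f_\infty$ are $2\rms$-Lipschitz in $\rho$, uniformly in $V$.

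It remains to combine these with concentration of $\rho(\delta\bm{J})$. Writing $\rho(\delta\bm{J})=\tfrac1V\sum_{\r}\tfrac{\delta J(\r)+\rms}{2\rms}$ as an average of iid Bernoulli$(1/2)$ variables gives $\mathbb{E}|\rho(\delta\bm{J})-\tfrac12|\le(4V)^{-1/2}$, so
\[
\bigl|\mathbb{E}_{\delta\bm{J}}f_V(\rho(\delta\bm{J}))-f_\infty(\tfrac12)\bigr|\le 2\rms\,\mathbb{E}\bigl|\rho(\delta\bm{J})-\tfrac12\bigr|+\bigl|f_V(\tfrac12)-f_\infty(\tfrac12)\bigr|,
\]
where the first term on the right vanishes as $V\to\infty$ by the Lipschitz bound and the concentration estimate, and the second by the Laplace step. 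This simultaneously proves the middle equality (Laplace step), the existence of the left-hand limit, and that it equals $f_\infty(1/2)$.

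I expect the main work to be in the Laplace step: one must verify the at-most-linear growth of $G_J$ carefully to get coercivity, confirm that the minimizer of $\psi_{\bar J,1/2}$ is attained, and match the upper and lower bounds on $-\tfrac1{\beta V}\log Z$ without invoking any regularity of $\psi$ beyond continuity. The Lipschitz-in-$\rho$ estimate, by contrast, is robust, and it is the only place where the $\delta$-form of the disorder enters; the same scheme handles a general even distribution of $\delta J$ with $\rho$ replaced by the empirical law of $\delta J$ on $V$ sites and the Bernoulli estimate replaced by the law of large numbers, which is the extension discussed in Appendix~\eqref{app:MFT-HS-extension}.
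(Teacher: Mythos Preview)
Your proposal is correct and follows essentially the same two-step scheme as the paper: a Laplace/saddle-point argument for the second equality, together with a uniform-in-$w$ bound $|\psi_{\bar J,\rho}-\psi_{\bar J,1/2}|\le \mathrm{const}\cdot|\rho-\tfrac12|$ (the paper's constant $C/\beta$ is your explicit $2\rms$) combined with the law of large numbers for $\rho(\delta\bm J)$ to obtain the first equality. The only noteworthy difference is that for the Laplace step you work directly from continuity and quadratic coercivity of $\psi_{\bar J,1/2}$, whereas the paper invokes Lemma~\ref{app-lem:saddle-point}, which relies on analyticity; your route is slightly more elementary and avoids that regularity hypothesis, but the overall architecture of the argument is the same.
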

We leave the proof of this main statement to the end of this section. Instead, let us try to understand why the statement ``should" be true and its physical implications.
Indeed, when considering $Z_{V,\bar{J},\rho=1/2}$, due to the exponential weight $\propto e^{-\beta V \psi_{\bar{J},\rho=1/2}}$, at least schematically, only the minimizers $w_\pm^\star$ of $\psi_{\bar{J},\rho=1/2}$ should contribute to the free energy in the thermodynamic limit $V\to \infty$.
A similar logic applies when computing the average magnetization of each layer $l=\pm$, i.e., by taking the derivative of the weight $e^{-\beta V \psi_{V,\bar{J},\rho(\delta \bm{J})}}$ with respect to $w_l$, we obtain
\begin{equation}
    \left\bra \frac{1}{V} \sum_{\r} \sigma_l (\r)\right\ket_{V,\bm{J}}=\frac{1}{2d\inplane} \bra w_l\ket_{V\psi_{\bar{J},\rho(\delta \bm{J})}}
\end{equation}
Where the right-hand-side the average with respect to the measure $\propto e^{-\beta V \psi}$. In the thermodynamic limit $V\to\infty$, only the minimizer $w_\pm^\star$ will contribute to the average on the right-hand-side, and thus at least schematically, 
\begin{equation}
    \lim_{V\to\infty}\mathbb{E}_{\bm{J}}\left\bra \frac{1}{V} \sum_{\r} \sigma_l (\r)\right\ket_{V,\bm{J}}=\frac{1}{2d\inplane} w_l^\star
\end{equation}
Technically, due to exact $U(1)$ symmetry of the mean-field model, the average magnetization $\bra m\ket=0$. Rather, we should compute the average squared magnetization $\bra m^2\ket$ to account for $U(1)$ symmetry. However, the reasoning is the same and thus we see that there is a 1-1 correspondence between the minimizers $w_l^\star$ and the magnetization of each layer $l=\pm$, as we claimed in the main text.

% In this case, 

The rest of this subsection will be contributed to proving Theorem \eqref{app-thm:free-energy}, where the proof involves 2 steps.
The first step is that the free energy corresponding to $Z_{V,\bar{J},\rho=1/2}$ converges and has the limit given on the right-hand-side (RHS). The second step is to show the first equality, i.e., the two partition functions have the same limit. We will also need the following lemma, obtained from Exercise 2.6 in Ref. \cite{velenik}, which we repeat here (albeit slightly modified for our case) for completeness.

\begin{lemma} [Exercise 2.6 in Ref. \cite{velenik}]
\label{app-lem:saddle-point}
Let $f:\mathbb{R} \to \mathbb{R}$ be analytic and bounded below by $O(x^2)$ for sufficiently large $x$. Then
\begin{equation}
    \lim_{N\to \infty}\sqrt{N} \int_{-\infty}^{\infty} e^{-N (f(x) - \min f(x)) dx} >0
\end{equation}
\end{lemma}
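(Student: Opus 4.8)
The plan is to recognize this as a one-dimensional Laplace/saddle-point estimate and reduce it to a purely local analysis near the minimizers of $f$. The growth hypothesis — which I read as $f(x)\ge c_1 x^2$ for all sufficiently large $|x|$, with $c_1>0$ — does two jobs: it makes $\int_{\mathbb{R}} e^{-N(f(x)-\min f)}\,dx$ finite for every $N\ge 1$ (Gaussian-type tails), and, together with continuity, it forces $\min f$ to be attained, say at $x_0$, with the whole set of minimizers contained in a bounded interval $[-R,R]$. Since the claim is only that the limit is positive, I would immediately discard all of the integral except a small symmetric interval around $x_0$ and bound $\sqrt N\int_{|x-x_0|\le\delta} e^{-N(f-\min f)}\,dx$ from below.

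For the local bound, analyticity is exactly the convenient hypothesis: near $x_0$ there is a convergent expansion $f(x)-f(x_0)=a_m(x-x_0)^m+O((x-x_0)^{m+1})$ with $m$ even and $a_m>0$ (here $f$ is nonconstant by the growth condition, and $m$ odd or $a_m\le 0$ would contradict $x_0$ being a local minimum). Shrinking $\delta$ one then gets $0\le f(x)-f(x_0)\le C(x-x_0)^2$ on $|x-x_0|\le\delta$, so that
\begin{equation}
\sqrt N\int_{-\infty}^{\infty} e^{-N(f(x)-\min f)}\,dx\ \ge\ \sqrt N\int_{|x-x_0|\le\delta} e^{-NC(x-x_0)^2}\,dx\ =\ \int_{|v|\le\delta\sqrt N} e^{-Cv^2}\,dv\ \longrightarrow\ \sqrt{\pi/C}>0,
\end{equation}
after the substitution $v=\sqrt N(x-x_0)$ and $N\to\infty$. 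This already establishes $\liminf_{N\to\infty}\sqrt N\int e^{-N(f-\min f)}\,dx>0$, which is the form actually used in the proof of Theorem \eqref{app-thm:free-energy}. If one also wants finiteness of the limit, I would add the complementary estimate: outside $\bigcup_j(x_j-\delta,x_j+\delta)$ one has $f-\min f\ge c_\delta>0$ on the bounded part and $f-\min f\ge c_1 x^2-O(1)$ on the tails, so $\sqrt N$ times the integral over the complement is at most $2R\sqrt N\,e^{-Nc_\delta}+\sqrt N\int e^{-N(c_1x^2-O(1))}\,dx$, which stays bounded (the first term vanishes, the second is a rescaled Gaussian), and a standard second-order Laplace expansion at each $x_j$ then identifies the limit as $\sum_j\sqrt{2\pi/f''(x_j)}$.

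The one genuinely delicate point — and where I expect the care to go — is a degenerate global minimum, $f''(x_0)=0$ (so $m\ge 4$): then the natural width of the peak is $N^{-1/m}$ rather than $N^{-1/2}$, and $\sqrt N\int e^{-N(f-\min f)}\,dx\to+\infty$. This does not break the lemma, since the lower-bound argument above never used nondegeneracy, so the conclusion ``$>0$'' holds with the value interpreted in $[0,+\infty]$; but it does mean the ``$\lim$'' should be read either as a $\liminf$ or under the (harmless, and valid for the $\psi_{\bar J,\rho=1/2}$ arising in the application) assumption that the minima are nondegenerate. I would phrase the lemma in the $\liminf$ form and flag the nondegenerate refinement separately, so that nothing downstream relies on a limit that might be $+\infty$.
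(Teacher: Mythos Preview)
Your proposal is correct and follows the same Laplace-method strategy as the paper: localize near a minimizer, sandwich $f-\min f$ by quadratics via the Taylor expansion, rescale by $\sqrt N$, and kill the complement using strict positivity of $f-\min f$ on compact sets away from the minima together with the quadratic growth at infinity. Your handling of the degenerate case $f''(x_0)=0$ (limit $=+\infty$, hence ``$>0$'' in the extended sense, or read as a $\liminf$) is in fact more careful than the paper's, which simply asserts ``WLOG $f''(0)>0$'' without spelling out that a higher-order minimum changes the natural scaling.
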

\begin{proof}
    Since this was left as an exercise in the reference, we shall provide a short proof here for completeness. Indeed, since $f(x)$ is analytic and nontrivial, we see that its minimums cannot have a limit. Since $f\to \infty$ at large $x$, we see that it can only have a finite number of minima. Therefore, without loss of generality, we shall assume that $f$ has a unique minimum at $x=0$ and take $f(0)=0$. Without loss of generality, we shall also assume that $f''(0) >0$, since if not, we can always take the lowest nonzero coefficient (which must be of even order of derivation since it is a min) in the Taylor series.
    By Taylor series, we see that for sufficiently small $\epsilon >0$, there exists $c(\epsilon),C(\epsilon)>0$ such that
    \begin{equation}
        c(\epsilon) \le \frac{f(x)}{x^2} \le C(\epsilon), \quad |x|\le \epsilon
    \end{equation}
    Notice that we can take $c(\epsilon), C(\epsilon) \to f''(0)/2$ as $\epsilon \to 0$. There also exists $R$ and $M$ such that if $|x| \ge R$, then $f(x) \ge Mx^2$. Hence, we see that
    \begin{align}
        \sqrt{N} \int e^{-N f(x)} dx &= \int e^{-N f(x/\sqrt{N})} dx\\
        &= \int_{|x| \le \epsilon \sqrt{N}}e^{-N f(x/\sqrt{N})} +\int_{\epsilon \sqrt{N} < |x| \le R\sqrt{N}} e^{-N f(x/\sqrt{N})} +\int_{|x| >R\sqrt{N}} e^{-N f(x/\sqrt{N})}
    \end{align}
    otice that
    \begin{equation}
        \int_{|x| > R\sqrt{N}} e^{-N f(x/\sqrt{N})} dx \le \int_{|x| > R\sqrt{N}} e^{-Mx^2} \to 0, \quad N\to \infty
    \end{equation}
    Since $f$ has a unique min valued $\min f = 0$ at $x=0$, we see that there exists $g(\epsilon)$ such that $f(x) \ge g(\epsilon) >0$ for $|x| >\epsilon$ and thus
    \begin{equation}
        \int_{\epsilon \sqrt{N} < |x| \le R\sqrt{N}} e^{-N f(x/\sqrt{N})} dx \le e^{-Ng(\epsilon)} \sqrt{N} \times (R-\epsilon) \to 0, \quad N\to \infty 
    \end{equation}
    Hence, only the first term will contribute to the limit. Indeed, notice that
    \begin{equation}
        \int_{|x| \le \epsilon \sqrt{N}}e^{-N f(x/\sqrt{N})} \le \int_{|x| \le \epsilon \sqrt{N}} e^{-c(\epsilon)x^2} dx \le \sqrt{\frac{\pi}{c(\epsilon)}}
    \end{equation}
    And similarly,
    \begin{align}
        \int_{|x| \le \epsilon \sqrt{N}}e^{-N f(x/\sqrt{N})} &\ge \int_{|x| \le \epsilon \sqrt{N}} e^{-C(\epsilon)x^2} dx \\
        \liminf_{N\to\infty} \int_{|x| \le \epsilon \sqrt{N}} e^{-N f(x/\sqrt{N})} &\ge \sqrt{\frac{\pi}{C(\epsilon)}}
    \end{align}
    Therefore, we have
    \begin{equation}
        \sqrt{\frac{\pi}{C(\epsilon)}} \le \liminf_{N\to\infty} \int e^{-N f(x/\sqrt{N})} \le \limsup_{N\to\infty} \int e^{-N f(x/\sqrt{N})} \le \sqrt{\frac{\pi}{c(\epsilon)}}
    \end{equation}
    Take $\epsilon \to 0$ and we see that the statement follows.
\end{proof}

\begin{proof}[Proof of Theorem \eqref{app-thm:free-energy}]
    Notice that the first step is nothing but a multi-dimensional application of the previous Lemma \eqref{app-lem:saddle-point}. Indeed, $F$ is analytic with respect to $|w|^2, \vphi, \eta$ and thus 
    \begin{equation}
        \lim_{V\to\infty} \frac{-1}{\beta V} \log Z_{V,\bar{J},\rho=1/2} = \min_{w_\pm \in \mathbb{R}^2} \psi_{\bar{J},\rho =1/2}(w_+,w_-)
    \end{equation}
    Let us now attempt to show the 1$^\text{st}$ equality. Indeed, notice that
    \begin{align}
        Z_{V,\bar{J},\rho(\delta J)} = Z_{V,\bar{J},\rho=1/2} \left\bra e^{-\beta V (\psi_{\bar{J},\rho(\delta J)} - \psi_{\bar{J},\rho=1/2}) } \right\ket_{V,\bar{J}, \rho=1/2}
    \end{align}
    Where the $\bra\cdots \ket$ is the average with respect to the weight $\propto e^{-\beta V \psi_{\bar{J},\rho=1/2} (w_+,w_-)}$, integrated over $w_\pm$, and normalized via $Z_{V,\bar{J},\rho=1/2}$. Therefore,
    \begin{align}
        \frac{1}{V} |\log Z_{V,\bar{J},\rho(\delta J)} -\log Z_{V,\bar{J},\rho=1/2}| \le \frac{1}{V} \left|\log \left\bra e^{-\beta V (\psi_{\bar{J},\rho(\delta J)} - \psi_{\bar{J},\rho=1/2})} \right\ket_{V,\bar{J}, \rho=1/2} \right|
    \end{align}
    Notice that there exists some constant $C>0$ such that
    \begin{equation}
        |\psi_{\bar{J},\rho(\delta J)} - \psi_{\bar{J},\rho=1/2}| \le \frac{C}{\beta} \left| \rho(\delta J) -\frac{1}{2}\right|
    \end{equation}
    Hence,
    \begin{equation}
         \frac{1}{V} |\log Z_{V,\bar{J},\rho(\delta J)} -\log Z_{V,\bar{J},\rho=1/2}| \le C \left| \rho(\delta J) -\frac{1}{2}\right|
    \end{equation}
    Therefore, we see that after disorder averaging, we have
    \begin{equation}
        \mathbb{E}_{\delta J} \frac{1}{V} |\log Z_{V,\bar{J},\rho(\delta J)} -\log Z_{V,\bar{J},\rho=1/2}| \le C \mathbb{E}_{\delta J}\left| \rho(\delta J) -\frac{1}{2}\right| \to 0, \quad V\to\infty
    \end{equation}
    Where the limit follows from the central limit theorem. Therefore, we see that the statement follows.
\end{proof}

\subsection{Extension to Arbitrary (Even) Distributions of $\delta J$}
\label{app:MFT-HS-extension}
We comment that when $\delta J$ is $\delta$-distributed, i.e., $\delta J =\pm \rms$, we see that $\psi_{\bar{J},\rho=1/2}$ in Theorem \eqref{app-thm:free-energy} is equal to $F$ given in Eq. \eqref{eq:free-energy}.
In fact, $F$ is the natural extension to arbitrary even probability distributions of $\delta J$. 
More specifically, in the case of an arbitrary even probability distribution $\dP[\delta J]=\dP[-\delta J]$, notice that $V_+ G_{\bar{J}+\rms} +V_- G_{\bar{J}-\rms}$  in Eq. \eqref{app-eq:delta-G} is replaced schematically\footnote{
In the general continuum case, one would actually sum over partitions of $\dR$, i.e., $V_{\delta J}$ is the number of lattice sites with disorder values in the range $(\delta J- \Delta/2, \delta  J+\Delta/2)$ where $\delta J= n\times \Delta$ and $n\in \dZ$, and the summation is over all $n\in \dZ$. 
Ultimately, one would take $V\to \infty$ and $\Delta \to 0$ so that the summation approximates an integral.
} with
\begin{equation}
    \sum_{\delta J} V_{\delta J} G_{\bar{J}+\delta J}
\end{equation}
where the summation is over all possible values of $\delta J$, and $V_{\delta J}$ is the number of lattice sites with disorder value $\delta J$.
In the limit $V\to \infty$, for ``well-behaved" probability distributions, we expect $V_{\delta J}/V\to \dP[\delta J]$ and thus the quantity can be replaced by
\begin{equation}
    \sum_{\delta J} V_{\delta J} G_{\bar{J}+\delta J} \sim V \times \dE_{\delta J} G_{J+\delta J}
\end{equation}
And the remainder of the proof is similar.
Admittedly, what ``well-behaved" means rigorously is a small subtlety that we do not delve into, since it is not essential to the techniques developed in the proof.
% -------------------------------
\section{Rewriting the Free Energy Density}
\label{app:MFT-rewrite}
In the main text, we claimed that $F$ can be rewritten as Eq. \eqref{eq:free-energy-rewrite} using the new variables $a,\eta, \vphi$. This was proven using the integral representation of the modified Bessel function, i.e.,
\begin{equation}
    I_0 (|z|) = \int_{\mathbb{S}^1} \exp {\Re (z e^{-i\phi})} \frac{d\phi}{2\pi}, \quad z\in \mathbb{C} \cong \mathbb{R}^2
\end{equation}
More specifically, if we use the natural isomorphism and view $w_\pm \in \mathbb{C} \cong \mathbb{R}^2$, then
\begin{align}
    G_J  &= \int_{\phi_+=-\pi}^\pi \int_{\phi_- =-\pi}^\pi \frac{d\phi_\pm}{(2\pi)^2} \exp{ \left[\beta \Re \left(w_+e^{-i\phi_+} +w_- e^{-i\phi_-} \right) \right]} e^{ \beta J \cos (\phi_+ -\phi_-)} \\
    &= \int_{\phi=-\pi}^\pi \frac{d\phi}{2\pi} \, e^{ \beta J \cos (\phi_+ -\phi_-)} \int_{\phi_- =-\pi}^\pi \frac{d\phi_-}{2\pi} \exp{ \left[\beta \Re \left((w_+e^{-i\phi}  +w_-) e^{-i\phi_-} \right) \right]} , \quad \phi = \phi_+-\phi_- \\
    &= \int_{\phi=-\pi}^\pi  \frac{d\phi}{2\pi} \, e^{ \beta J \cos (\phi_+ -\phi_-)} I_0\left(\beta |w_+e^{-i\phi}  +w_-| \right)
\end{align}
A change of variables using $w_\pm \mapsto a,\eta,\vphi$ will result in Eq. \eqref{eq:free-energy-rewrite}. 
% ---------------------------------------
\section{Self-Consistency Equations}
\label{app:MFT-self-consistency}

In this section, we obtain the self-consistency equations for $T_{\mathbb{Z}_2}$ assuming that $\eta^\star=1$ (or equivalently, we only need to consider $F(w_+,w_-)$ with $|w_+|=|w_-|=|w|$). Indeed, similar to the $U(1)$ transition with respect to $a$, the $\mathbb{Z}_2$ transition occurs exactly when the 2\ts{nd} order derivative with respect to $\vphi$ has a sign transition. More specifically, the critical temperature $T_{\mathbb{Z}_2}$ must satisfy
\begin{equation}
    \left.\frac{\partial^2}{\partial \vphi^2 } \right|_{\vphi =0} F =0, \quad  \left.\frac{\partial}{\partial |w| } \right|_{\vphi =0} F =0
\end{equation}
Where the 2\ts{nd} equality is to determine the minimizing $|w^\star|$. Notice that
\begin{equation}
    K_J \equiv e^{G_J}= \iint d\phi_\pm \exp[{\beta |w| (\cos (\phi_+ -\vphi/2) +\cos(\phi_- +\vphi/2) -\beta J\cos (\phi_+-\phi_-)}]
\end{equation}
Where we have set $w_\pm = |w| e^{\pm i\vphi/2}$ (possible since $K_J$ is independent of the average phase). It's then easy to check that the self-consistency equations reduce to Eq. \eqref{eq:T-Z2} in the main text.

% ---------------------------------------
\section{Critical Temperatures}
\label{app:MFT-critical-T}

In this section, we derive the series expansion of the critical temperatures $T_{U(1)},T_{\mathbb{Z}_2}$ with respect to $\bar{J} \ll \rms$. For simplicity, we shall set $d\inplane =1$. Indeed, $T_{U(1)}$ is determined by the self-consistency equation
\begin{equation}
    T = 1+\mathbb{E}_J r_1 (\beta J)
\end{equation}
where $r_\nu \equiv I_\nu/I_0$ and $I_\nu$ are the modified Bessel functions.
When $\bar{J}=0$, it's clear that $T_{U(1)}=1$ and thus to obtain the linear approximation, we set $T=1+c_1 \bar{J}+\cdots_{\bar{J}}$ where $\cdots_{\bar{J}}$ denotes higher order terms in $\bar{J}$. Therefore, keeping only linear terms, we find that
\begin{align}
    c_1 \bar{J} +\cdots &= \mathbb{E}_{\delta J} r_1( (\bar{J}+\delta J)(1-c_1 \bar{J}+\cdots_{\bar{J}}))\\
    &=  \mathbb{E}_{\delta J} r_1( \delta J +(1-c_1 \delta J) \bar{J}+\cdots_{\bar{J}}) \\
    &=  \mathbb{E}_{\delta J} r_1 (\delta J) +\bar{J} \times\mathbb{E}_{\delta J} [r_1'(\delta J)(1-c_1 \delta J)] +\cdots_{\bar{J}}\\
    &=   \bar{J} \times\mathbb{E}_{\delta J} r_1'(\delta J) +\cdots_{\bar{J}}
\end{align}
Where the last equality uses the fact that $\delta J \mapsto r_1(\delta J)$ is odd. Therefore,
\begin{align}
    T &= 1+\bar{J}\times \overline{r_1' ( \delta J)} +\cdots_{\bar{J}} \\
    &=1+\bar{J} \times \left[\frac{1}{2} -\frac{1}{2^3} \overline{(\delta J)^2} +\cdots \right] +\cdots \bar{J}
\end{align}
Where the last equality uses the Taylor series expansion of the derivative $r_1'$.
The 2\ts{nd} transition $T_{\mathbb{Z}_2}$ can be determined similarly (along with the corresponding magnetization $w^\star$ at $T_{\mathbb{Z}_2}$). More specifically, we can set
\begin{equation}
    T=1+s_1 \bar{J} +\cdots_{\bar{J}}, \quad |w|^2 = t_1 \bar{J}+\cdots_{\bar{J}}
\end{equation}
Where used the fact that $T_{U(1)}=T_{\mathbb{Z}_2}$ when $\bar{J}=0$ and that $F$ is analytic with respect to $|w|^2$ rather than $|w|$ (the singularity in derivative with respect to $|w|$ occurs at $|w|=0$). Notice that
\begin{align}
    \frac{1}{I_0(\beta J)} K_J &= \frac{1}{I_0(\beta J)} \int I_0 (2\beta |w|\cos (\phi/2)) e^{\beta J \cos \phi} d\phi \\
    &=1 + \frac{1}{2} (\beta |w|)^2 (1+r_1) +\cdots_{\beta|w|}
\end{align}
Where $\cdots_{\beta|w|}$ denotes higher order terms in $\beta |w|$ and $r_\nu =r_\nu(\beta J)$ as in the main text. Similarly, notice that
\begin{align}
    \frac{1}{I_0(\beta J)} K_J \times 2\bra \cos \phi_+ \ket_J &=\frac{1}{I_0(\beta J)} \int 2 \cos (\phi/2) I_1 (2\beta |w| \cos (\phi/2)) e^{\beta J \cos \phi} d\phi \\
    &= \beta |w| (1+r_1) +\frac{1}{2^2} (\beta |w|)^3 \left( \frac{3}{2} + 2r_1 +\frac{1}{2} r_2\right) +\cdots_{\beta|w|}
\end{align}
And that
\begin{align}
    \frac{1}{I_0(\beta J)} K_J \times \bra (\sin \phi_+ -\sin \phi_-)^2 \ket_J &=\frac{1}{I_0(\beta J)} \int (1-\cos \phi) (I_0 (2\beta |w| \cos (\phi/2))+I_2 (2\beta |w| \cos (\phi/2)) e^{\beta J \cos \phi} d\phi \\
    &= (1-r_1) +\frac{3}{2^3} (\beta|w|)^2 (1-r_2)  +\cdots_{\beta|w|}
\end{align}
Therefore,
\begin{align}
    2\bra \cos \phi_+ \ket_J &= \beta |w| (1+r_1) -\frac{1}{2^3}(\beta |w|)^3 ( (1+2r_1)^2 -r_2) +\cdots_{\beta|w|} \\
    \bra (\sin \phi_+ -\sin \phi_-)^2 \ket_J &=  (1-r_1) - \frac{1}{2^3} (\beta |w|)^2 (1 +3 r_2 -4r_1^2) +\cdots_{\beta |w|}
\end{align}
Using the self-consistency equations in Eq. \eqref{eq:T-Z2}, we find that up to linear terms in $\bar{J}$, we have the following independent equations
\begin{align}
    s_1 &= \mathbb{E}_{\delta J} \left[+r_1'(\delta J) -\frac{1}{2^3} t_1 (1+4 r_1^2(\delta J) -r_2(\delta J) \right] \\
    s_1 &= \mathbb{E}_{\delta J} \left[ -r_1'(\delta J) -\frac{1}{2^3} t_1 (1-4 r_1^2(\delta J) +3r_2(\delta J) \right]
\end{align}
Solving the system of equations, we find
\begin{align}
    t_1 &= 2\times  \frac{\overline{r_1' (\delta J)}}{\overline{r_1(\delta J)^2} -\overline{r_2(\delta J)}/2} \\
    s_1 &= -\frac{1}{2^2} \frac{\overline{r_1' (\delta J)} (1+\overline{r_2(\delta J)}}{\overline{r_1(\delta J)^2} -\overline{r_2(\delta J)}/2}
\end{align}
In particular, we find that in the limit of $\rms \to 0^+$,
\begin{equation}
    s_1 = -\frac{2}{3} \frac{1}{\rms^2} +O(1)
\end{equation}
% ---------------------------------------
\section{The nonlinear sigma model near $T=0$ and small disorder $\rms\to 0^+$}
\label{app:MFT-nls}

\subsection{Why it Breaks Down at Large $\rms$}
As discussed in the main text, one may naively attempt to obtain near $T=0$ results by using the nonlinear sigma model. More specifically, we would like to compute
\begin{equation}
    e^{G_J} = \iint_{[-\pi,\pi]^2} d\phi_\pm \exp \left[ \beta \sum_l |w_l| \cos \phi_l + \beta J \cos (\vphi +\phi_+-\phi_-) \right]
\end{equation}
by replacing $\cos \phi_l \approx 1-\phi_l^2/2$ and extending the integration limits to $[-\pi,\pi]^2 \mapsto \mathbb{R}^2$ since only small values of $\phi_l$ are expected to contribute to the integral in the limit where $\beta |w_l| \to \infty$ 
(as an example, the asymptotic expansion of the modified Bessel function $I_0(x)$ where $x\to \infty$ utilizes this expansion). 
The limit $\beta |w_l^\star| \to \infty$ is expected to be true since near $T=0$ since $\beta \to \infty$ and the minimizers $w_l^\star$ are related to the magnetiztations of each layer, which are expected to be nonzero for sufficiently low temperatures. 
However, we will find that the expansion is ill-regulated when $|w_l|$ is small and thus we cannot readily find the minimizers $w_l^\star$ using the expansion.

Since ultimately the expansion fails for general disorder strength $\rms$, it is instructive to consider the algebraically simple case $|w_+|=|w_-|=|w|$ and $\vphi=\pi/2$. Notice that these restrictions are sufficient since we will rigorously prove in Appendix \eqref{app:MFT-rigor}, that if $\bar{J}=0$, then the minimizer must satisfy $\eta^\star=1,\vphi^\star=\pm \pi/2$ for all temperatures. 
In this case, up to 2\ts{nd} order in $\phi_l$, we have
\begin{align}
    e^{G_J} &\approx e^{2\beta|w|}\iint_{\mathbb{R}^2} d\phi_\pm \exp \left[ -\frac{1}{2} \beta |w| (\phi_+^2+\phi_-^2) - \beta J (\phi_+-\phi_-) \right] \\
    &= e^{2\beta|w|} e^{\beta J^2/|w|} \frac{\pi}{\beta |w|} \\
    -\frac{1}{\beta} \mathbb{E}_J G_J &\approx -\frac{\mathbb{E}_J J^2}{|w|} -2|w|  +\frac{1}{\beta} \log \frac{\beta |w|}{\pi} \\
    F &\approx -\frac{\mathbb{E}_J J^2}{|w|} -2|w|  +\frac{1}{2d\kappa} |w|^2  +\frac{1}{\beta} \log \frac{\beta |w|}{\pi} 
\end{align}
On the right hand side, there are two terms which $\to -\infty$ in the limit $|w| \to 0^+$, i.e., the first and last terms. 
In the case where $\beta \to \infty$, one can argue that the last term $\to 0$ and should not be considered. 
However, the first term $\propto -1/|w|$ is independent of temperature and thus cannot be excluded. It's then clear that we cannot readily minimize $F$ using the nonlinear sigma expansion.

\subsection{Why it Works at Small $\rms$}
As discussed in the main text, in the limit of small disorder $\rms \to 0^+$, the minimizers $a^*(\rms),\eta^*(\rms)$ are expected to be sufficiently close to their non-disordered counterparts $a^*(\rms=0),\eta^*(\rms=0)$, and thus the divergence at $a\to 0^+$ does not affect us. In this case, we can assume that $|w_+|=|w_-| = |w|=2d\inplane$ and find that
\begin{align}
    e^{G_J} &\approx e^{2\beta|w|} e^{\beta J \cos \vphi} \iint_{\mathbb{R}^2} d\phi_\pm \exp \left[ -\frac{1}{2} \beta \phi^T A\phi  -\beta \phi^T \psi\right] 
\end{align}
Where $\phi = (\phi_+ ,\phi_-)$, $\psi = J \sin \vphi (1,-1)$, and
\begin{eqnarray}
    A = 
    \begin{bmatrix}
        |w| +J\cos \vphi & -J\cos \vphi \\
        -J\cos \vphi & |w| +J\cos \vphi
    \end{bmatrix}
\end{eqnarray}
Therefore,
\begin{align}
    e^{G_J} &\approx e^{2\beta|w|} e^{\beta J \cos \vphi} e^{\frac{1}{2} \psi^T A\psi} \frac{\pi}{\beta\sqrt{\det A}} \\
    \label{eq:nls-small-disorder}
    -\frac{1}{\beta} \mathbb{E}_J G_J &\approx -\sin^2 \vphi \mathbb{E}_J \left[ \frac{J^2}{|w|} \left( 1+ \frac{2J}{|w|} \cos \vphi \right)^{-1}\right] -\bar{J} \cos \vphi -2|w|  +\frac{1}{\beta} \log \frac{\beta \sqrt{\det A}}{\pi}
\end{align}

Therefore, in the limit $\beta \to\infty$ and $\bar{J} \ll \rms \to 0^+$, we see that
\begin{align}
    F &\approx -\sin^2 \vphi \frac{\overline{J^2}}{|w|} +\bar{J}\cos \vphi +\cdots \\
    &=-\frac{\overline{J^2}}{|w|} \left[1+ \left(\frac{\bar{J}|w|}{2\overline{J^2}}\right)^2 -\left(\cos \vphi -\left(\frac{\bar{J}|w|}{2\overline{J^2}}\right)\right)^2\right]
\end{align}
Where $\cdots$ contains higher order terms or terms that are independent of $\vphi$. Therefore, we find that the minimizer $\vphi^\star$ satisfies
\begin{align}
    \cos \vphi^* &= \frac{\bar{J}|w|}{2\overline{J^2}} \sim \frac{\bar{J} d\inplane}{\overline{(\delta J)^2}}, \quad \bar{J}/\rms \to 0
\end{align}

\subsection{Subtleties of The Previous Solution}
Notice that there is a small subtlety regarding the first term in Eq. \eqref{eq:nls-small-disorder}. 
Indeed, for the nonlinear sigma approximation to converge, we require that $\det A >0$ and thus $J$ cannot be ``too" negative, as indicated by the denominator of the first term. 
However, if we were to naively take the disorder average for a distribution which is not compactly supported (i.e., nonzero probability for large negative $J$), the first term would diverge. Therefore, for large negative values of $J$, one would need to use the asymptotically exact formula instead of the nonlinear sigma approximation, and the first term would actually be the disorder average over $J$ which is not ``too negative", e.g..,
\begin{equation}
    -\sin^2 \vphi \mathbb{E}_J \left[ 1\left\{1+ \frac{2J}{|w|} \cos \vphi \ge \frac{1}{2} \right\}\frac{J^2}{|w|} \left( 1+ \frac{2J}{|w|} \cos \vphi \right)^{-1}\right]
\end{equation}
Where $1\{A\}$ is the indicator function, i.e., $=1$ if $A$ is true, and $=0$ otherwise, and the cutoff $1/2$ was arbitrarily chosen.
In the $\rms \to 0^+$ limit, the contributions of large negative $J$ would $\to 0$ and thus doesn't affect our argument (hence, the abuse of notation). 
% ---------------------------------------
\section{Rigorous results for general temperatures $T$: TRSB behavior}
\label{app:MFT-rigor}

In this section, we shall prove the following two statements.

\begin{theorem}[Orientation]
    \label{app-thm:Z2-orient}
    Let the probability distribution of $\delta J$ be even and let $a>0$ where $a$ is defined in Eq. \eqref{eq:new-variables}. If $\bar{J} \ge 0$, then $F$ is minimized when $|\vphi^\star| \le \pi/2$. Equivalently, $F$ is minimized when $\bar{J} \cos \vphi^\star \ge 0$.
\end{theorem}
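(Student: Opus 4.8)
The plan is to reduce the statement to a single deterministic correlation inequality for the two–site measure, with the evenness of the law of $\delta J$ doing all of the probabilistic work.

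\emph{Step 1 (symmetry reductions).} By \eqref{eq:free-energy-rewrite}, $F=\tfrac{1}{d\inplane}a^{2}-\tfrac1\beta\overline{G_J}$, and $a,\eta$ are independent of $\varphi$; so for every fixed $a>0,\ \eta\in[0,1]$, minimizing $F$ over $\varphi$ is maximizing $R(\varphi):=\overline{G_J}(\varphi)$. From \eqref{eq:free-energy-phase-diff-rewrite}, the substitution $\phi\mapsto-\phi$ shows $R$ is even in $\varphi$, while using $\cos(\phi+\pi-\varphi)=-\cos(\phi-\varphi)$ followed by $\phi\mapsto-\phi$ gives $G_J(\pi-\varphi)=G_{-J}(\varphi)$. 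Since $\delta J\overset{d}{=}-\delta J$, averaging the latter yields $R(\pi-\varphi)=\overline{G_{-J}}(\varphi)=\overline{G_{-\bar J+\delta J}}(\varphi)$, i.e. $R(\pi-\varphi)$ is $R(\varphi)$ with $\bar J$ replaced by $-\bar J$. Hence it suffices to prove that, for $\varphi\in[0,\pi/2]$ and $\bar J\ge0$,
\[ R_{\bar J}(\varphi)\ \ge\ R_{-\bar J}(\varphi),\qquad R_m(\varphi):=\mathbb{E}_{\delta J}\big[G_{m+\delta J}(\varphi)\big], \]
which shows $R(\varphi)\ge R(\pi-\varphi)$ on $[0,\pi/2]$, so (with $R$ even) $\max_\varphi R$ is attained in $[-\pi/2,\pi/2]$ for every $a,\eta$, giving a global minimizer of $F$ with $\bar J\cos\varphi^\star\ge0$.

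\emph{Step 2 (eliminate the disorder).} The function $m\mapsto R_m(\varphi)-R_{-m}(\varphi)$ vanishes at $m=0$, so it is enough that its derivative be nonnegative for $m\ge0$. After the change of variables $\phi\mapsto\phi-\varphi$ in \eqref{eq:free-energy-phase-diff-rewrite}, $G_J(\varphi)=\ln\!\int k(\phi-\varphi)e^{\beta J\cos\phi}\,d\phi$ with $k(\psi):=I_0(2\beta a\sqrt{1+\eta\cos\psi})$, so $G'_s:=\partial_s G_s(\varphi)=\beta\langle\cos\phi\rangle_{\mu_s}$ where $\mu_s\propto k(\phi-\varphi)e^{\beta s\cos\phi}\,d\phi$. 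Differentiating and using $\delta J\overset{d}{=}-\delta J$ once more, $\partial_m\big(R_m-R_{-m}\big)=\mathbb{E}_{\delta J}\big[G'_{m+\delta J}+G'_{-(m+\delta J)}\big]$. Thus the whole theorem follows from the disorder–free inequality
\[ \langle\cos\phi\rangle_{\mu_s}+\langle\cos\phi\rangle_{\mu_{-s}}\ \ge\ 0\qquad\text{for all }s\in\mathbb{R},\ \varphi\in[0,\pi/2]. \]
(This is sharp: at $\varphi=\pi/2$ the substitution $\phi\mapsto\pi-\phi$ swaps $\mu_s\leftrightarrow\mu_{-s}$ and sends $\cos\phi\mapsto-\cos\phi$, so the left side vanishes identically, consistent with the $\bar J=0$ minimizer being $\varphi^\star=\pm\pi/2$.)

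\emph{Step 3 (geometric reformulation).} For $a>0,\ \eta\in[0,1]$, $k$ is smooth, even, $2\pi$–periodic and non-increasing on $[0,\pi]$ — a symmetric unimodal ``in-plane field profile'' peaked at $0$. Applying $\phi\mapsto\phi+\pi$ to the $\mu_{-s}$–term rewrites the inequality of Step 2 as $M(\varphi)-M(\pi-\varphi)\ge0$, where $M(\psi):=\langle\cos\phi\rangle$ for the measure $\propto k(\phi-\psi)e^{\beta s\cos\phi}\,d\phi$ (using that $k$ is even to identify the shifted profile). Since $M$ is even and $\varphi\le\pi-\varphi$ on $[0,\pi/2]$, it is enough that $M$ be non-increasing on $[0,\pi]$ — morally clear, since raising $\psi$ drags the peak of $k(\cdot-\psi)$ away from the common maximum of $\cos\phi$ and $e^{\beta s\cos\phi}$ at $\phi=0$. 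Quantitatively $M'(\psi)=\mathrm{Cov}_{\mu}\!\big(\cos\phi,\,-(\ln k)'(\phi-\psi)\big)$, and
\[ -(\ln k)'(\chi)=\beta a\eta\,\frac{\sin\chi}{\sqrt{1+\eta\cos\chi}}\,r_1\!\big(2\beta a\sqrt{1+\eta\cos\chi}\big),\qquad r_1:=I_1/I_0\in(0,1), \]
which is $\le0$ on the half-circle about $\phi=0$ (where $\cos\phi$ is largest) and $\ge0$ on the half-circle about $\phi=\pi$ (where $\cos\phi$ is smallest); so $\cos\phi$ and $-(\ln k)'(\phi-\psi)$ are oppositely arranged, and the covariance ought to be $\le0$.

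\emph{The obstacle.} The heart of the matter is making $M'(\psi)\le0$ rigorous: on the circle neither $\cos\phi$ nor $(\ln k)'(\phi-\psi)$ is monotone, so a bare Chebyshev/FKG rearrangement does not close it and one must use the product structure. Two routes I would attempt: (i) symmetrize over two independent copies $\phi,\phi'\sim\mu$, writing $\mathrm{Cov}_\mu(\cos\phi,q(\phi))=\tfrac12\,\mathbb{E}\!\otimes\!\mathbb{E}\big[(\cos\phi-\cos\phi')(q(\phi)-q(\phi'))\big]$ with $q=-(\ln k)'(\cdot-\psi)$, expand $\cos\phi-\cos\phi'=-2\sin\tfrac{\phi+\phi'}{2}\sin\tfrac{\phi-\phi'}{2}$ and similarly for $q$ (using the explicit monotone factors $\sin$, $1/\sqrt{1+\eta\cos}$, $r_1$), and reduce the sign to that of an integral over the center-of-mass angle $\tfrac{\phi+\phi'}{2}$ against a manifestly nonnegative kernel; or (ii) argue directly in the two-site picture $e^{G_J}=\iint d\sigma_\pm\,e^{\beta(w_+\cdot\sigma_++w_-\cdot\sigma_-+J\sigma_+\cdot\sigma_-)}$ of \eqref{eq:two-site-H}, where the inequality says that $\langle\cos(\phi_+-\phi_-+\varphi)\rangle$ does not decrease when the field $w_-$ is rotated toward alignment with $w_+$, and invoke a Ginibre-type correlation inequality for the two-spin XY system. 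Either way, every step except this last correlation inequality uses only the evenness — not the shape — of the law of $\delta J$, matching the remark following the theorem.
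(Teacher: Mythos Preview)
Your reduction in Steps~1--2 is correct and in fact tidier than the paper's: by differentiating $R_m-R_{-m}$ in $m$ and using $\delta J\overset{d}{=}-\delta J$, you land directly on the single pointwise inequality
\[
\langle\cos\phi\rangle_{\mu_s}+\langle\cos\phi\rangle_{\mu_{-s}}\ \ge\ 0\qquad(s\in\mathbb{R},\ \varphi\in[0,\pi/2]),
\]
which, after multiplying by $K_sK_{-s}>0$, is \emph{exactly} the quantity $\dot K_pK_{-p}+K_p\dot K_{-p}$ that the paper isolates in its case~(2). (The paper's case~(1) is then a consequence, so your route avoids an unnecessary case split.) Your Step~3 reformulation $M(\varphi)-M(\pi-\varphi)\ge0$ is also correct.

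The genuine gap is precisely the one you flag: you do not prove this inequality, and neither of your sketched attacks is what actually works. A Ginibre-type argument (your option~(ii)) is not available here---indeed the paper remarks that standard correlation inequalities fail because of the structure of the model---and a straight FKG/Chebyshev rearrangement (your first instinct in Step~3) founders for the reason you note: on the circle neither factor is monotone. Your option~(i), ``symmetrize over two copies and pass to center-of-mass/relative angles,'' is in the right direction and is exactly how the paper begins: it writes $\dot K_pK_{-p}+K_p\dot K_{-p}$ as a double integral over $(\theta_1,\theta_2)$ and changes to $\chi=\tfrac{\theta_1+\theta_2}{2},\ \delta=\tfrac{\theta_1-\theta_2}{2}$. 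But the decisive idea you are missing is algebraic rather than analytic: one splits each angular integral over the four quadrants via sign variables $\xi,\eta\in\{\pm1\}$ (so that $\cos\theta=\xi|\cos\theta|$, $\sin\theta=\eta|\sin\theta|$), observes that every factor in the integrand---$I_0(\sqrt{1+\eta\cos(\cdot)})$, $\cosh(\cdot)$, and in particular the product $I_0(h_1)I_0(h_2)$ after a further integral representation---is a ``conic combination'' (power series with nonnegative coefficients) in the products $\xi_\chi\xi_\varphi,\ \eta_\chi\eta_\varphi,\ \xi_\delta$, etc.; then uses the bijection $\xi_\chi\mapsto\xi_\chi\xi_\varphi,\ \eta_\chi\mapsto\eta_\chi\eta_\varphi$ to pull out an overall factor of $\xi_\varphi$. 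What remains is a sum over $\{\pm1\}^k$ of a conic expression, which is automatically $\ge0$ since $\sum_{x=\pm1}x^n\in\{0,2\}$. This sign-decomposition trick (Lemmas~\ref{app-lem:U1-sign}, \ref{app-lem:Z2-sum}, \ref{lem:measure-mu} in the paper) is the missing ingredient that converts your heuristic covariance argument into a proof.
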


\begin{theorem}[TRSB]
    \label{app-thm:TRSB}
    Let the probability distribution of $\delta J$ be even and let $a>0$ where $a$ is defined in Eq. \eqref{eq:new-variables}. If $\bar{J} = 0$, then $F$ is minimized when $|\vphi^\star| = \pi/2$ and $\eta^\star =1$ where $\vphi, \eta$ are defined in Eq. \eqref{eq:new-variables}.
\end{theorem}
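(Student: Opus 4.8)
The plan is to fix $a>0$ and show that among all admissible $(\eta,\vphi)$ the free energy $F$ of Eq.~\eqref{eq:free-energy-rewrite} is smallest at $\eta=1$, $|\vphi|=\pi/2$; since $F=a^2/(d\inplane)-\beta^{-1}\overline{G_J}$ and the first term is fixed, this is the same as maximizing $\overline{G_J}(\eta,\vphi)$. Write $K_J\equiv e^{G_J}$ as in Eq.~\eqref{eq:free-energy-phase-diff-rewrite}, i.e.\ $K_J(\eta,\vphi)=\int \rho_\eta(\phi)\,e^{\beta J\cos(\phi+\vphi)}\,d\phi$ with $\rho_\eta(\phi)=I_0(2\beta a\sqrt{1+\eta\cos\phi})\ge 0$ even in $\phi$. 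One checks directly that $K_J$ is even in $\vphi$ and that $K_J(\eta,\vphi+\pi)=K_{-J}(\eta,\vphi)$; in particular $K_J(\eta,\pi/2)=K_{-J}(\eta,\pi/2)$. Since $\bar J=0$ and the law of $\delta J$ is even, $J\overset{d}{=}-J$, hence $\overline{G_J}=\tfrac12\,\overline{\ln\!\big(K_JK_{-J}\big)}$ and $\overline{G_J}(\eta,\pi/2)=\overline{\ln K_J(\eta,\pi/2)}$. The theorem is therefore equivalent to
\begin{equation}
\overline{\ln\!\big(K_{J}(\eta,\vphi)\,K_{-J}(\eta,\vphi)\big)}\ \le\ \overline{\ln K_{J}(1,\pi/2)^2}\qquad\text{for all }\eta\in[0,1],\ \vphi,
\end{equation}
with equality only when $\eta=1$ and $|\vphi|=\pi/2$. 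I would prove the stronger pointwise-in-$J$ statement $K_{J}(\eta,\vphi)K_{-J}(\eta,\vphi)\le K_{J}(1,\pi/2)^2$, after which applying $\tfrac12\,\overline{\ln(\cdot)}$ finishes the argument.

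I would split the pointwise bound into the $\eta$-direction and the $\vphi$-direction. For the $\eta$-direction one has $K_J(\eta,\pi/2)=\tfrac12\int S(\eta\sin\phi)\,e^{\beta J\cos\phi}\,d\phi$ with $S(x)=I_0(2\beta a\sqrt{1+x})+I_0(2\beta a\sqrt{1-x})$, which is an even power series in $x$ with strictly positive coefficients when $a>0$; hence $S(\eta\sin\phi)$ is strictly increasing in $\eta\in[0,1]$ for a.e.\ $\phi$, and so is $K_J(\eta,\pi/2)$, giving $g_\eta(\pi/2)=K_J(\eta,\pi/2)^2\le K_J(1,\pi/2)^2$. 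It then remains to show, for each fixed $\eta$, that $g_\eta(\vphi):=K_J(\eta,\vphi)K_{-J}(\eta,\vphi)$ is maximized over $\vphi$ at $\vphi=\pi/2$. Using the Bessel addition theorem $I_0\!\big(2\beta a\sqrt{1+\eta\cos\psi}\big)=\sum_m I_m(\beta|w_+|)I_m(\beta|w_-|)e^{im\psi}$ (all Fourier coefficients $p_m\ge0$) together with $e^{\beta J\cos\phi}=\sum_n I_n(\beta J)e^{in\phi}$ gives $K_{\pm J}(\eta,\vphi)=2\pi\sum_m(\pm1)^m p_m I_m(\beta J)e^{-im\vphi}$. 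Because $g_\eta$ is even and $\pi$-periodic in $\vphi$, it has the form $g_\eta(\vphi)=\sum_k d_{2k}e^{-2ik\vphi}$, and the required statement is precisely $(-1)^k d_{2k}\ge 0$ for all $k$ — equivalently, that $g_\eta(\vphi+\pi/2)$ has nonnegative Fourier coefficients — since that yields $g_\eta(\vphi)=d_0+2\sum_{k\ge1}d_{2k}\cos2k\vphi\le d_0+2\sum_{k\ge1}|d_{2k}|=g_\eta(\pi/2)$. A natural route is the change of variables $\phi_{1,2}\mapsto u=\tfrac{\phi_1+\phi_2}2,\ v=\tfrac{\phi_1-\phi_2}2$ in $g_\eta(\vphi)=\iint \rho_\eta(\phi_1)\rho_\eta(\phi_2+\pi)\,e^{\beta J[\cos(\phi_1+\vphi)+\cos(\phi_2+\vphi)]}d\phi_1 d\phi_2$, after which $\vphi$ enters only through $\cosh\!\big(2\beta J\sin v\,\sin(u+\vphi)\big)$, whose even-index harmonics in $u+\vphi$ carry exactly the sign $(-1)^k$, to be matched against the manifestly nonnegative even-index coefficients $I_{2k}(\beta J)\ge0$ (even index, so the sign of $J$ is irrelevant).

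The main obstacle is precisely this last sign control. The weight coefficients that enter $d_{2k}$ — schematically $\sum_r p_{k+r}p_{k-r}\cos(2rv)$ paired against the harmonics of the $\cosh$ factor — do \emph{not} all have a definite sign individually (that trigonometric sum can change sign in $v$), so one cannot argue termwise; the net sign $(-1)^k$ must emerge from a resummation using log-concavity / Turán-type inequalities for the modified Bessel functions $I_m$ — the same structural input that in the Ginzburg–Landau expansion near $T_{U(1)}$ produces the positive $\cos2\vphi$ coefficient in $c_4$ via $I_1^2>I_0I_2$. Once $g_\eta(\vphi)\le g_\eta(\pi/2)=K_J(\eta,\pi/2)^2\le K_J(1,\pi/2)^2$ is in hand, strictness away from $\eta=1$, $|\vphi|=\pi/2$ follows from strict positivity of the relevant coefficients (for $a>0$), and the disorder average and logarithm are monotone, so $\vphi^\star=\pm\pi/2$, $\eta^\star=1$ as claimed.
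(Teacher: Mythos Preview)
Your overall plan---reduce to maximizing $\overline{G_J}$, use the evenness of $J$ to work pointwise with $K_JK_{-J}$, then treat the $\eta$- and $\vphi$-directions separately with $\eta$ optimized at $\vphi=\pi/2$---matches the paper exactly. Your $\eta$-argument (that $S(x)=I_0(2\beta a\sqrt{1+x})+I_0(2\beta a\sqrt{1-x})$ is an even power series with positive coefficients, hence $K_J(\eta,\pi/2)$ is increasing in $\eta$) is correct and is a slightly cleaner packaging of what the paper does via its ``conic combination'' bookkeeping.

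The genuine gap is the $\vphi$-direction, and you flag it yourself: you reduce to the alternating-sign condition $(-1)^kd_{2k}\ge 0$ on the Fourier coefficients of $g_\eta$, observe that termwise positivity fails, and then only \emph{gesture} at log-concavity/Tur\'an inequalities for $I_m$ as the mechanism. That is not a proof, and it is not clear the Tur\'an route goes through without substantial additional work, since the coefficients $d_{2k}$ mix all the $p_m$'s in a convolution against a factor whose sign you do not control.

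The paper closes this gap by a different, purely combinatorial technique that avoids any Bessel inequality. It computes $\partial_\vphi(K_{\delta J}K_{-\delta J})$ directly, passes to the coupled variables $\chi=(\theta_1+\theta_2)/2$, $\delta=(\theta_1-\theta_2)/2$, and shows (Lemma~\ref{lem:measure-mu}) that the measure $\mu(\theta-\vphi)=I_0(h_1)I_0(h_2)$ is a \emph{conic combination}---i.e.\ a power series with nonnegative coefficients---in the sign products $\xi_\delta\xi_\chi\xi_\vphi$, $\xi_\delta\eta_\chi\eta_\vphi$, $\xi_\chi\eta_\chi\xi_\vphi\eta_\vphi$, where $\xi_s,\eta_s\in\{\pm1\}$ are the signs of $\cos s,\sin s$. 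The remaining factor from the derivative is likewise conic in the $\theta$-signs. Splitting each angular integral into quadrants (Lemma~\ref{app-lem:U1-sign}) turns the integral into a sum over $\xi,\eta\in\{\pm1\}$; a bijective relabeling $\xi_\chi\mapsto\xi_\chi\xi_\vphi$, $\eta_\chi\mapsto\eta_\chi\eta_\vphi$ factors out $\xi_\vphi\eta_\vphi$, and Lemma~\ref{app-lem:Z2-sum} (summing a conic expression over $\pm1$ is nonnegative) finishes: $\partial_\vphi(K_{\delta J}K_{-\delta J})$ has the sign of $\xi_\vphi\eta_\vphi=\operatorname{sgn}(\sin 2\vphi)$, giving monotonicity on $[0,\pi/2]$ and $[\pi/2,\pi]$ directly. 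No Tur\'an-type input is needed.
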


Indeed, the disordered average $\bar{J}$, at least schematically, corresponds to an interaction of the form $-\bar{J} \cos \phi(\r)$ in the Hamiltonian, and thus one would expect that the $\mathbb{Z}_2$ order parameter $\vphi^\star$ would be such that $\bar{J} \cos \vphi^\star \ge 0$ so the Hamiltonian can be minimized. Theorem \eqref{app-thm:Z2-orient} formalizes this concept within the context of mean field theory. 
Similarly, it was previously argued (based on physical reasoning) in Ref. \cite{yuan2023inhomogeneity} that a disordered system would exhibit time-reversal-symmetry breaking (TRSB) behavior near $\bar{J} = 0$.
Theorem \eqref{app-thm:TRSB} shows that this is indeed true in context of mean-field theory, and that the TRSB behavior is independent of the specific form of the disorder (as long as it is even with respect to $\delta J$).

To prove the given statements, let us introduce some useful lemmas as follows.
\begin{lemma}
    \label{app-lem:U1-sign}
    Let $f:\mathbb{S}^1 \to \mathbb{C}$ denote an ``well-behaved'' function (say continuous). Then
    \begin{align}
        \int_{\theta=-\pi}^\pi f(\cos \theta, \sin \theta) d\theta &= \sum_{\xi,\eta =\pm 1} \int_{\theta=0}^{\pi/2}f( \xi\cos \theta, \eta \sin \theta) d\theta \\
        &= \frac{1}{4} \sum_{\xi,\eta =\pm 1} \int_{\theta=-\pi}^{\pi}f( \xi|\cos \theta|, \eta |\sin \theta|) d\theta
    \end{align}
\end{lemma}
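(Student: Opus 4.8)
The plan is to establish both equalities by decomposing the circle into its four ``quadrant'' arcs and reducing each arc to the fundamental domain $[0,\pi/2]$ via sign-preserving reflections. First I would prove the first equality: write $\int_{-\pi}^{\pi} = \int_{0}^{\pi/2} + \int_{\pi/2}^{\pi} + \int_{-\pi/2}^{0} + \int_{-\pi}^{-\pi/2}$. On the arc $[\pi/2,\pi]$ substitute $\theta \mapsto \pi-\theta$ and use $\cos(\pi-\theta) = -\cos\theta$, $\sin(\pi-\theta)=\sin\theta$; on $[-\pi/2,0]$ substitute $\theta \mapsto -\theta$ and use $\cos(-\theta)=\cos\theta$, $\sin(-\theta)=-\sin\theta$; on $[-\pi,-\pi/2]$ substitute $\theta \mapsto \theta+\pi$ and use $\cos(\theta+\pi)=-\cos\theta$, $\sin(\theta+\pi)=-\sin\theta$. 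Each substitution has unit-modulus Jacobian and maps its arc bijectively onto $[0,\pi/2]$, and the four resulting integrands are $f(\xi\cos\theta,\eta\sin\theta)$ with $(\xi,\eta)$ running once through $(+,+),(-,+),(+,-),(-,-)$, which is exactly the asserted sum over $\xi,\eta=\pm1$.

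For the second equality I would first record the auxiliary identity $\int_{-\pi}^{\pi} g(|\cos\theta|,|\sin\theta|)\,d\theta = 4\int_{0}^{\pi/2} g(\cos\theta,\sin\theta)\,d\theta$, valid for any continuous $g$ on $[0,1]^2$. This follows from the same quadrant decomposition: under each of the three substitutions above, the pair $(|\cos\theta|,|\sin\theta|)$ becomes $(\cos\theta',\sin\theta')$ in the reflected variable $\theta'\in[0,\pi/2]$, precisely because the absolute values are insensitive to the sign flips $\cos\mapsto-\cos$ and $\sin\mapsto-\sin$ produced by those reflections; hence all four arc-integrals are equal. Applying this identity with $g(x,y) = f(\xi x,\eta y)$ for each fixed $(\xi,\eta)\in\{\pm1\}^2$, dividing by $4$, and summing over $(\xi,\eta)$ converts $\tfrac14\sum_{\xi,\eta}\int_{-\pi}^{\pi} f(\xi|\cos\theta|,\eta|\sin\theta|)\,d\theta$ into $\sum_{\xi,\eta}\int_{0}^{\pi/2} f(\xi\cos\theta,\eta\sin\theta)\,d\theta$, matching the middle expression already obtained in the first step.

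There is no genuine obstacle here; the only point requiring care is the bookkeeping of signs generated by each reflection, so that the four sign patterns $(\xi,\eta)$ appear exactly once each, together with handling the orientation-reversing substitutions (e.g.\ $\theta\mapsto-\theta$) with the correct endpoints so that no spurious overall minus sign is introduced. Continuity of $f$ is more than enough to legitimize all changes of variables, and in fact measurability and local integrability on $\mathbb{S}^1$ would already suffice.
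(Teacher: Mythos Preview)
Your proposal is correct and takes essentially the same approach as the paper: both decompose the circle into its four quadrant arcs and reduce each to the fundamental domain $[0,\pi/2]$ via the sign identifications $\cos\theta=\xi|\cos\theta|$, $\sin\theta=\eta|\sin\theta|$. The paper's proof is a one-sentence sketch of exactly this idea, while you spell out the four substitutions explicitly; there is no substantive difference.
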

\begin{proof}
    Let $\sigma = (\cos \theta, \sin \theta) \in \mathbb{S}^1$ and let $\xi, \eta=\pm 1$ denote the sign of the $x,y$ components of $\sigma$, i.e., that of $\cos\theta, \sin \theta$, respectively, so that $\cos \theta = \xi |\cos \theta|$ and $\sin \theta = \eta |\sin \theta|$. The statement is then clear after restricting the integration limits to $\theta \in (0,\pi/2)$ so that $\cos \theta, \sin \theta \ge 0$.
\end{proof}

Let us also introduce the concept of \textit{conic combinations}, that is, a function $f(x,y,z)$ is a \textit{conic combination} of $x,y,z \in \mathbb{R}$ if $f$ can be written as a summation of terms of the form $x^k y^l z^m$ (with nonnegative coefficients) as $k,l,m \in \mathbb{N}$, i.e.,
\begin{equation}
    f(x,y,z) = \sum_{k,l,m \ge 0} c_{klm} x^k y^l z^m, \quad c_{klm} \ge 0
\end{equation}
And we call $f$ a \textit{conic function} of $x,y,z$.

For example, the modified Bessel function $I_0(x)$ is a conic function of $x$ (when viewing it as a Taylor series). Since the explicit values of the nonnegative coefficients $c_{klm}$ will not matter, we will use $\{x, y,z\}$ to denote the conic function $f(x,y,z)$.
If $f(x,y,z)$ is even/odd under parity $x,y,z\mapsto -x,-y,-z$, we will denote it by $\{\cdots\}^\pm$, respectively. Notice that the product of 2 conic functions is also conic, i.e., if $f(x,y)$ and $g(z)$ is conic, then so is $f(x,y)g(z)$ and thus we can write
\begin{equation}
    \{x,y\} \{z\} = \{x,y,z\}
\end{equation}
We now claim the following.
\begin{lemma}
    \label{app-lem:Z2-sum}
    \begin{equation}
        \sum_{x,y,z,...=\pm 1} \{x, y, z,...\} \ge 0
    \end{equation}
\end{lemma}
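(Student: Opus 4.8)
The plan is to peel the statement back to the elementary parity fact that $\sum_{s=\pm1}s^{k}$ equals $2$ for $k$ even and $0$ for $k$ odd, and in particular is never negative. First I would write the conic function in its defining form
\[
\{x,y,z,\dots\}=\sum_{k,l,m,\dots\ge 0}c_{klm\dots}\,x^{k}y^{l}z^{m}\cdots,\qquad c_{klm\dots}\ge 0,
\]
where the multi-index runs over the (finitely many, say $n$) variables. Because all coefficients are nonnegative, this series converges absolutely at the all-ones point $(1,1,\dots,1)$ exactly when the conic function is finite there, which is automatic in every instance where the lemma is applied (the conic functions that arise are products of exponentials and modified Bessel functions, hence restrictions of entire functions). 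That absolute convergence is the only analytic input, and it is precisely what licenses interchanging the \emph{finite} sum over the $2^{n}$ sign assignments with the series.

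Next I would carry out the sign sum monomial by monomial. Since the sign sum factors over the variables,
\[
\sum_{x=\pm1}\sum_{y=\pm1}\cdots\;x^{k}y^{l}z^{m}\cdots=\Bigl(\sum_{x=\pm1}x^{k}\Bigr)\Bigl(\sum_{y=\pm1}y^{l}\Bigr)\cdots=\bigl(1+(-1)^{k}\bigr)\bigl(1+(-1)^{l}\bigr)\cdots,
\]
which equals $2^{n}$ when every exponent is even and $0$ as soon as one exponent is odd; either way it is $\ge 0$. Re-summing against the nonnegative weights $c_{klm\dots}$ then gives
\[
\sum_{x,y,z,\dots=\pm1}\{x,y,z,\dots\}=2^{n}\!\!\sum_{\substack{k,l,m,\dots\\ \text{all even}}}\!\!c_{klm\dots}\ \ge\ 0,
\]
which is the claim.

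There is essentially no obstacle here: the lemma is a bookkeeping statement repackaging parity cancellation, and the single point deserving a sentence of care is the absolute convergence used to swap the two summations. I would also record the companion observation, proved in the same way, that the sign sum of an \emph{odd} conic function $\{\cdots\}^{-}$ vanishes identically; this is the fact that will be paired with the present lemma (together with Lemma~\eqref{app-lem:U1-sign}) in the subsequent analysis of the free energy $F$.
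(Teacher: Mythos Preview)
Your proof is correct and follows essentially the same approach as the paper: expand the conic function in monomials, factor the sign sum as $\prod_s\sum_{s=\pm1}s^{k_s}$, and use that each factor is $2$ or $0$. Your added remark on absolute convergence is a reasonable piece of hygiene the paper leaves implicit, but otherwise the arguments coincide.
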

\begin{proof}
    Notice that
    \begin{align}
        \sum_{x,y,z,...=\pm 1} x^k y^l z^m\cdots &= \left[\sum_{x=\pm 1} x^k \right]\left[\sum_{y=\pm 1} y^l\right]\left[\sum_{z=\pm 1} z^m \right] \cdots \\
        &\ge 0
    \end{align}
    Where we used the fact that $\sum_{x=\pm 1} x^k = 2$ if $k$ is even and $=0$ if $k$ is odd. Therefore, the lemma follows.
\end{proof}
% ==============================================================
\subsection{A Digression to the Non-disorder $J_2$ Problem}
With the previous lemmas in mind, we are now ready to prove the given statements in Theorem \eqref{app-thm:Z2-orient} and \eqref{app-thm:TRSB}. 
However, since the proof is relatively convoluted for the disordered problem, we shall first demonstrate its application on the analogous all-to-all model of the non-disordered (inherent) $J_2 \cos 2\phi$ inter-layer interaction, which turns out to be substantially simpler.
More specifically, we shall prove the following.

\begin{theorem}
    \label{app-thm:non-disordered}
    Consider the mean-field Hamiltonian with inherent $J_2>0$ inter-layer coupling, i.e.,
    \begin{align}
        \mathcal{H}_{\mathrm{MF}} &= -\frac{d\inplane}{V} \sum_{\r,\r', l} \sigma_l(\r) \cdot \sigma_l(\r') +\mathcal{H}_{\mathrm{int}} \\
        \mathcal{H}_{\mathrm{int}} &= \sum_{\r} (-J_1 \cos \phi(\r) +J_2 \cos 2\phi(\r)) \\
        &= \sum_{\r} (-J_1 \sigma_+(\r)\cdot \sigma_-(\r) +J_2 (2 (\sigma_+(\r)\cdot \sigma_-(\r))^2-1))
    \end{align}
    Then the free energy density of $\mathcal{H}_{\mathrm{MF}}$ is
    \begin{equation}
        \mathcal{F}_{\mathrm{MF}} \equiv \lim_{V\to \infty} \frac{-1}{\beta V} \log \mathcal{Z}_{V,J_1,J_2} = \min_{w_\pm \in \mathbb{R}^2} F(w_+,w_-)
    \end{equation}
    Where
    \begin{align}
        F &= \frac{1}{4d\inplane}(w_+^2+w_-^2) -\frac{1}{\beta} G_{J_1,J_2} \\
        G_{J_1,J_2}  &= \ln \iint d\sigma_+ d\sigma_- e^{-\beta H_{J_1,J_2}} \\
        -H_{J_1,J_2} &= w_+\cdot \sigma_+ +w_- \cdot \sigma_- +J_1 \sigma_+ \cdot \sigma_- - J_2 (2 (\sigma_+ \cdot \sigma_-)^2 -1)
    \end{align}
    Furthermore, define $\vphi, a, \eta$ as in the main text, i.e. Eq. \eqref{eq:new-variables}, for the disordered problem with respect to $w_\pm \in \mathbb{R}^2$. 
    \begin{enumerate}
        \item (Orientation) If $J_1\ge 0$, then $F$ is minimized when $|\vphi^\star| \le \pi/2$ for all $\beta, a,\eta>0$. Equivalently, $F$ is minimized when $J_1 \cos \vphi^\star \ge 0$.
        \item (TRSB) If $J_1=0$, then the free energy $F$ is minimized by $|\vphi^\star| = \pi/2$ and $\eta^\star =1$ for all $\beta,a >0$.
    \end{enumerate}
\end{theorem}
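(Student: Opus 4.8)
The plan is to reduce the free-energy statement to the saddle-point machinery already in place, and then reduce both the ``orientation'' and ``TRSB'' claims to positivity of a single angular integral, which I would dispatch with the conic-combination lemmas.

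\emph{Free energy and reduction.} This is the deterministic analogue of Theorem~\eqref{app-thm:free-energy}: after the Hubbard--Stratonovich transform of Appendix~\eqref{app:MFT-HS} the two in-plane sums decouple site by site, and since $J_1,J_2$ are non-random there is no $\rho(\delta\bm{J})$ to control, so the claim is exactly the Laplace-method argument already used to prove Theorem~\eqref{app-thm:free-energy} (built on Lemma~\eqref{app-lem:saddle-point}). Using the $U(1)$ symmetry to fix the average phase and integrating out the common angle of $\sigma_\pm$ as in Appendix~\eqref{app:MFT-rewrite}, minimizing $F$ over $(\eta,\varphi)$ at fixed $a>0$ becomes maximizing
\[
 K(\varphi)\ \propto\ \int_{-\pi}^{\pi} I_0\!\big(2\beta a\sqrt{1+\eta\cos(\varphi-\phi)}\big)\,e^{\beta J_1\cos\phi-\beta J_2\cos 2\phi}\,d\phi .
\]
I would record three elementary facts: $x\mapsto I_0(2\beta a\sqrt{1+x})=\sum_k \tfrac{(\beta a)^{2k}}{(k!)^2}(1+x)^k$ is conic; hence $h(x):=I_0(2\beta a\sqrt{1+x})-I_0(2\beta a\sqrt{1-x})$ is \emph{odd} conic and $\chi(x):=\tfrac12[I_0(2\beta a\sqrt{1+x})+I_0(2\beta a\sqrt{1-x})]$ is \emph{even} conic; and $e^{-\beta J_2\cos 2\phi}=e^{-\beta J_2\cos^2\phi}\,e^{\beta J_2\sin^2\phi}$ is a strictly positive weight times a function conic in $\sin\phi$ (here $J_2>0$).

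\emph{Orientation ($J_1\ge 0$).} The natural comparison is $K(\varphi)$ versus $K(\pi-\varphi)$, since shifting the phase difference by $\pi$ reverses the $\cos\phi$ coupling while fixing $\cos2\phi$. Computing $K(\varphi)-K(\pi-\varphi)$ and symmetrizing the integrand under $\phi\mapsto-\phi$ and then $\phi\mapsto\pi-\phi$, one gets
\[
 K(\varphi)-K(\pi-\varphi)\ \propto\ \int_{-\pi}^{\pi}\big[h(\eta\cos(\varphi-\phi))+h(\eta\cos(\varphi+\phi))\big]\,\sinh(\beta J_1\cos\phi)\,e^{-\beta J_2\cos 2\phi}\,d\phi .
\]
For $\cos\varphi\ge 0$, set $c=\cos\phi,\ s=\sin\phi$ and expand via $\cos(\varphi\mp\phi)=\cos\varphi\,c\pm\sin\varphi\,s$: the bracket is a nonnegative combination of monomials $c^{\mathrm{odd}}s^{\mathrm{even}}$ (because $h$ is odd conic and $\cos\varphi\ge0$), $\sinh(\beta J_1 c)$ is a nonnegative combination of $c^{\mathrm{odd}}$, and multiplying by $e^{\beta J_2 s^2}$ preserves this, so the integrand equals $e^{-\beta J_2\cos^2\phi}$ times a nonnegative combination of $c^{\mathrm{even}}s^{\mathrm{even}}$; each such term integrates to something $\ge 0$. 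This last step is precisely where Lemmas~\eqref{app-lem:U1-sign} and~\eqref{app-lem:Z2-sum} enter (odd powers integrate to zero, even powers give nonnegative conic sums). Hence $K(\varphi)\ge K(\pi-\varphi)$ when $\cos\varphi\ge 0$, so every maximizer satisfies $\cos\varphi^\star\ge 0$, i.e. $J_1\cos\varphi^\star\ge 0$; strictness (ruling out $\cos\varphi<0$) follows from one nonvanishing term when $J_1>0$.

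\emph{TRSB ($J_1=0$) and the main obstacle.} With $J_1=0$ both $\varphi\mapsto-\varphi$ and $\varphi\mapsto\varphi+\pi$ are symmetries of $K$; averaging over the orbit $\{\pm\varphi,\ \pi\pm\varphi\}$ gives
\[
 K(\varphi)=\sum_{i\ \mathrm{even}} b_i\,\eta^{\,i}\,\Phi_i(\varphi),\qquad \Phi_i(\varphi)=\tfrac12\int_{-\pi}^{\pi}\big[\cos^i(\varphi-\phi)+\cos^i(\varphi+\phi)\big]e^{-\beta J_2\cos 2\phi}\,d\phi,
\]
with $b_i\ge 0$ the (conic) Taylor coefficients of $\chi$ and $\Phi_i>0$. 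Since $b_i\eta^i$ increases in $\eta$ and $b_2>0$, the maximum forces $\eta^\star=1$, and it remains to show $\Phi_i(\varphi)\le\Phi_i(\pi/2)$ for every even $i$. I would write $\Phi_i(\varphi)=N(2\varphi)$ with $N(\alpha)=\int\cos^i\phi\,e^{-\beta J_2\cos(2\phi+\alpha)}d\phi$, insert the Jacobi--Anger expansion $e^{-\beta J_2\cos(2\phi+\alpha)}=I_0(\beta J_2)+2\sum_{n\ge1}(-1)^nI_n(\beta J_2)\cos n(2\phi+\alpha)$ and the Fourier expansion of $\cos^i\phi$, and do the angular integral to get
\[
 N(\alpha)=2\pi\,2^{-i}\Big[\tbinom{i}{i/2}I_0(\beta J_2)+2\sum_{n=1}^{i/2}(-1)^n I_n(\beta J_2)\tbinom{i}{i/2+n}\cos n\alpha\Big].
\]
Because $I_n(\beta J_2)\ge 0$ and the binomials are $\ge 0$, the trivial bound $|\cos n\alpha|\le 1$ yields $N(\alpha)\le N(\pi)$, the maximum being at $\alpha=\pi$ since there $(-1)^n\cos n\pi=1$ cancels each sign; thus $\Phi_i(\varphi)\le\Phi_i(\pi/2)$, strict already at $i=2$ unless $\varphi=\pm\pi/2$, giving $\varphi^\star=\pm\pi/2$, $\eta^\star=1$. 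The genuinely delicate point throughout is the second-order term: $e^{-\beta J_2\cos 2\phi}$ is strictly positive but \emph{not} conic in $\cos\phi$ (its Taylor coefficients alternate), so the conic lemmas cannot be applied to the integrand wholesale; the fix, used in both parts, is to split $\cos 2\phi=\cos^2\phi-\sin^2\phi$, fold only the conic half $e^{\beta J_2\sin^2\phi}$ into the conic bookkeeping, and handle the residual $e^{-\beta J_2\cos^2\phi}$ by pointwise positivity (orientation) or the Fourier bound $|\cos n\alpha|\le1$ (TRSB). Making this split mesh with the parity accounting of the $w_\pm$- and $\phi$-integrations, and isolating all the $J_1$-dependence of the orientation integrand into the odd conic $h$ and $\sinh(\beta J_1\cos\phi)$, is the part that needs the most care; the remainder is bookkeeping with Lemmas~\eqref{app-lem:saddle-point}, \eqref{app-lem:U1-sign}, \eqref{app-lem:Z2-sum}.
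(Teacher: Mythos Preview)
Your proposal is correct. The free-energy reduction and the orientation argument are essentially the paper's: both compare $K(\varphi)$ with $K(\pi-\varphi)$ and reduce to a parity/conic-positivity statement, the paper via the sign-variable bookkeeping of Lemmas~\eqref{app-lem:U1-sign}--\eqref{app-lem:Z2-sum}, you via explicit symmetrization in $\phi\mapsto-\phi$ and $\phi\mapsto\pi-\phi$. One remark: the ``delicate point'' you flag---that $e^{-\beta J_2\cos 2\phi}$ is not conic in $\cos\phi$---is a non-issue in the paper's framework, because $\cos 2\phi=2\cos^2\phi-1$ depends only on $|\cos\phi|$ and is therefore invariant under every sign flip $\xi_\theta,\eta_\theta\mapsto\pm1$; it simply rides along as a positive weight. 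Your split $e^{-\beta J_2 c^2}e^{\beta J_2 s^2}$ works but is unnecessary.

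Your TRSB argument, by contrast, takes a genuinely different route. The paper differentiates: it shows $\partial_\varphi K$ has the sign of $\sin 2\varphi$ via the same conic machinery (so $K$ peaks at $\varphi=\pm\pi/2$), and then, at $\varphi=\pi/2$, shows $\partial_\eta K\ge 0$ by another conic argument. You instead Fourier-expand: writing $K=\sum_{i\text{ even}} b_i\eta^i\Phi_i(\varphi)$ with $b_i\ge0$ and $\Phi_i>0$, you read off $\eta^\star=1$ directly from monotonicity of each term, and then use Jacobi--Anger plus the Fourier series of $\cos^i\phi$ to compute $\Phi_i(\varphi)=N(2\varphi)$ in closed form and bound it by $|\cos n\alpha|\le1$, with equality exactly at $\alpha=\pi$. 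This is clean and even yields a small bonus over the paper (monotonicity in $\eta$ for every $\varphi$, not only $\varphi=\pi/2$). The trade-off is that your method leans on the explicit harmonic structure of the $\cos 2\phi$ coupling, whereas the paper's sign-variable technique is what generalizes transparently to the disordered problem of Theorems~\eqref{app-thm:Z2-orient}--\eqref{app-thm:TRSB}.
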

\begin{proof}
    The proof of the free energy density of $\mathcal{H}_{\mathrm{MF}}$ follows a similar argument as that of Theorem \eqref{app-thm:free-energy}. 
    Therefore, let us focus on the final 2 statement. 
    Indeed, since the statement holds for all $\beta,a>0$, we shall simplify notation in our proof by setting $\beta =2a =1$ (One can of course repeat the proof with general values of $\beta,a>0$). 
    We shall also write $K = e^G$ as the partition function of the two-site Hamiltonian $H$ (where we have omitted the subscripts $J_1,J_2$).

    \begin{enumerate}
        \item Let us consider that case where $J_1\ge 0$ and compare the possibilities of $\vphi$ and $\pi-\vphi$, i.e., the sign of
        \begin{equation}
            K(\vphi) - K(\pi -\vphi)
        \end{equation}
        Notice that 
        \begin{equation}
            K_{J_1,J_2}(\vphi) = \int \frac{d\theta}{2\pi} I_0(\sqrt{1+\eta \cos\theta}) e^{J_1 \cos (\theta+\vphi) -J_2 \cos 2(\theta +\vphi)}
        \end{equation}
        Therefore,
        \begin{equation}
            K_{J_1,J_2} (\pi -\vphi) = K_{-J_1, J_2}(\vphi)
        \end{equation}
        And thus we shall consider
        \begin{align}
            K_{J_1,J_2} (\vphi) -K_{-J_1, J_2}(\vphi) &=2 \int \frac{d\theta}{2\pi} I_0(\sqrt{1+\eta \cos\theta}) e^{-J_2 \cos 2(\theta+\vphi)} \sinh( J_1 \cos(\theta+\vphi)) \\
            &= 2 \int \frac{d\theta}{2\pi} I_0(\sqrt{1+\eta \cos(\theta-\vphi)}) e^{-J_2 \cos 2\theta} \sinh( J_1 \cos\theta)
        \end{align}
        Notice that
        \begin{equation}
            I_0(x) = \sum_{n=0}^\infty \frac{1}{n!^2} \left(\frac{x}{2}\right)^{2n}
        \end{equation}
        Therefore, $I_0(\sqrt{1+\eta \cos (\theta-\vphi)})$ is a conic combination of $1+\eta \cos (\theta-\vphi)$. 
        Furthermore, let $\xi_\theta, \eta_\theta=\pm 1$ denote the sign of the $x,y$ (real, imaginary) components of $e^{i\theta}$, and similarly for $\xi_\phi, \eta_\phi$. 
        Then we have
        \begin{equation}
            1+\eta \cos(\theta-\vphi) = 1+\xi_\theta \xi_\phi \times \eta |\cos \theta \cos \vphi| +\eta_\theta \eta_\vphi \times \eta|\sin \theta \sin \vphi|
        \end{equation}
        In particular, we see that $I_0(\sqrt{1+\eta \cos (\theta-\vphi)})$ is a conic combination of $\xi_\theta \xi_\vphi$ and $\eta_\theta \eta_\vphi$, i.e.,
        \begin{equation}
            I_0(\sqrt{1+\eta \cos (\theta-\vphi)}) = \{\xi_\theta \xi_\vphi, \eta_\theta \eta_\vphi \}
        \end{equation}
        Also notice that
        \begin{align}
            e^{-J_2 \cos 2\theta} \sinh (J_1 \cos \theta) &= \xi_\theta \times  e^{-J_2 (2\cos^2 \theta -1)}  \sinh (J_1 |\cos \theta|) \\
            &= \{ \xi_\theta \}^-
        \end{align}
        Therefore, we have
        \begin{align}
            K_{J_1,J_2} (\vphi) -K_{-J_1, J_2}(\vphi) &= 2 \int \frac{d\theta}{2\pi} \{\xi_\theta \xi_\vphi, \eta_\theta \eta_\vphi \} \{ \xi_\theta \eta_\theta\}^- \\
            &= \frac{1}{4} \sum_{\xi_\theta, \eta_\theta = \pm 1}\int \frac{d\theta}{2\pi} \{\xi_\theta \xi_\vphi, \eta_\theta \eta_\vphi \} \{ \xi_\theta \}^- \\
            &= \frac{1}{4} \sum_{\xi_\theta, \eta_\theta = \pm 1}\int \frac{d\theta}{2\pi} \{\xi_\theta , \eta_\theta  \} \{ \xi_\theta \xi_\vphi \}^- \\
            &= \xi_\vphi  \times  \underbrace{\frac{1}{4} \sum_{\xi_\theta, \eta_\theta = \pm 1}\int \frac{d\theta}{2\pi} \{\xi_\theta , \eta_\theta  \} \{ \xi_\theta  \}^-}_{\ge 0}
        \end{align} 
        Where the 2\ts{nd} equality uses Lemma \eqref{app-lem:U1-sign}, the 3rd equality uses the fact that $\xi_\theta \mapsto \xi_\theta \xi_\vphi$ is a bijective mapping from $\{\pm 1\} \to \{\pm 1\}$ (and similarly $\eta_\theta \mapsto \eta_\theta \eta_\vphi$), and the 4th equality uses Lemma \eqref{app-lem:Z2-sum}. 
        Therefore, $K(\vphi) - K(\pi -\vphi)$ has the same sign as $\xi_\vphi $, and thus the statement regarding orientation follows.
        
        \item Setting $J_1=0$, we find that
        \begin{equation}
            K_{J_2}(\vphi) = \int \frac{d\theta}{2\pi} I_0(\sqrt{1+\eta \cos\theta}) e^{-J_2 \cos 2(\theta +\vphi)}
        \end{equation}
        Therefore, we see that
        \begin{align}
            \frac{\partial}{\partial \vphi} K_{J_2} (\vphi) &= 2J_2 \int \frac{d\theta}{2\pi} I_0(\sqrt{1+\eta \cos\theta}) e^{-J_2 \cos 2(\theta +\vphi)} \sin 2(\theta +\vphi) \\
            &= 2J_2 \int \frac{d\theta}{2\pi} I_0(\sqrt{1+\eta \cos (\theta-\vphi)}) e^{-J_2 \cos 2\theta} \sin (2\theta)
        \end{align}
        As before, we have
        \begin{equation}
            I_0(\sqrt{1+\eta \cos (\theta-\vphi)}) = \{\xi_\theta \xi_\vphi, \eta_\theta \eta_\vphi \}
        \end{equation}
        Also notice that
        \begin{align}
            e^{-J_2 \cos 2\theta} \sin (2\theta) &=2 e^{-J_2 (2\cos^2 \theta -1)} \sin \theta \cos\theta \\
            &= \{ \xi_\theta \eta_\theta\}^-
        \end{align}
        Therefore, we have
        \begin{align}
            \frac{\partial}{\partial \vphi} K_{J_2} (\vphi) &= 2J_2 \int \frac{d\theta}{2\pi} \{\xi_\theta \xi_\vphi, \eta_\theta \eta_\vphi \} \{ \xi_\theta \eta_\theta\}^- \\
            &= \frac{J_2}{2} \sum_{\xi_\theta, \eta_\theta = \pm 1}\int \frac{d\theta}{2\pi} \{\xi_\theta \xi_\vphi, \eta_\theta \eta_\vphi \} \{ \xi_\theta \eta_\theta\}^- \\
            &= \frac{J_2}{2} \sum_{\xi_\theta, \eta_\theta = \pm 1}\int \frac{d\theta}{2\pi} \{\xi_\theta , \eta_\theta  \} \{ \xi_\theta \xi_\vphi \eta_\theta \eta_\vphi\}^- \\
            &= \xi_\vphi \eta_\vphi \times  \underbrace{\frac{J_2}{2} \sum_{\xi_\theta, \eta_\theta = \pm 1}\int \frac{d\theta}{2\pi} \{\xi_\theta , \eta_\theta  \} \{ \xi_\theta  \eta_\theta \}^-}_{\ge 0}
        \end{align} 
        Where the 2\ts{nd} equality uses Lemma \eqref{app-lem:U1-sign}, the 3rd equality uses the fact that $\xi_\theta \mapsto \xi_\theta \xi_\vphi$ is a bijective mapping from $\{\pm 1\} \to \{\pm 1\}$ (and similarly $\eta_\theta \mapsto \eta_\theta \eta_\vphi$), and the 4th equality uses Lemma \eqref{app-lem:Z2-sum}. 
        Therefore, $\partial K_{J_2}(\vphi)/\partial \vphi$ has the same sign as $\xi_\vphi \eta_\vphi$.
        In particular, this implies that $K_{J_2}(\vphi)$ is increasing when $\vphi=0\to \pm \pi/2$ and decreases as $\vphi=\pm \pi/2 \to \pi$, and thus $K_{J_2}$ is maximized (indicating $F$ is minimized) when $|\vphi^\star| = \pi/2$.
        
        Let us now optimize $\eta$. Indeed, if we take $\vphi = \pm \pi/2$, we find that
        \begin{equation}
            K_{J_2}(\eta; \vphi=\pm \pi/2) =  \int \frac{d\theta}{2\pi} I_0(\sqrt{1+\eta \cos\theta}) e^{J_2 \cos 2\theta}
        \end{equation}
        Taking the derivative with respect to $\eta$ and we obtain
        \begin{equation}
            \frac{\partial}{\partial \eta}K_{J_2}(\eta) = \int \frac{d\theta}{2\pi} I_1(\sqrt{1+\eta \cos\theta})\frac{1}{2\sqrt{1+\eta \cos\theta}} \times \cos \theta  e^{J_2 \cos 2\theta}
        \end{equation}
        Notice that
        \begin{equation}
            \frac{I_1(x)}{x} = \sum_{n=0}^\infty \frac{1}{n!(n+1)!} \left( \frac{x}{2}\right)^{2n}
        \end{equation}
        Therefore, we see that  $I_1(\sqrt{1+\eta \cos\theta})/\sqrt{1+\eta \cos\theta}$ is a conic combination of $1+\eta \cos\theta$ (similar to the case of $I_0$), and thus we have
        \begin{equation}
            I_1(\sqrt{1+\eta \cos\theta})\frac{1}{2\sqrt{1+\eta \cos\theta}} = \{ \xi_{\theta}\}
        \end{equation}
        Also notice that 
        \begin{equation}
            \cos \theta  e^{J_2 \cos 2\theta} =\{ \xi_\theta \}^-
        \end{equation}
        Therefore, we have
        \begin{align}
            \frac{\partial}{\partial \eta}K_{J_2}(\eta) = \int \frac{d\theta}{2\pi} \{ \xi_\theta\}\{\xi_\theta\}^- \ge 0
        \end{align}
        Where we applied Lemma \eqref{app-lem:U1-sign} and \eqref{app-lem:Z2-sum} as before. Therefore, we see that $F$ is minimized when $|\vphi^\star| =\pi/2$ and $\eta^\star = 1$.
    \end{enumerate}
\end{proof}
% ==============================================================
\subsection{Returning to the Disorder Problem}
Let us now prove the statements for the disordered problem. 
Since the statements will hold for all temperatures $\beta >0$ and $a>0$, we shall simplify notation in our proof by setting $\beta = 2a= 1$ (One can of course repeat the proof with general values of $\beta, a>0$). We shall also write $K_J = e^{G_J}$ as the partition function of the two-site Hamiltonian $H_J$ in Eq. \eqref{eq:two-site-H}.
\begin{proof}[Proof of Theorem \eqref{app-thm:Z2-orient}]
    Notice that minimizing $F$ with respect to $\vphi$ is the equivalent to maximizing $\mathbb{E}_J G_J$ with respect to $\vphi$ and thus we will prove the latter.
    It's clear that $K_J (\vphi) = K_J (-\vphi)$. Therefore, we shall fix $|\vphi| \le \pi/2$ and compare $\vphi$ with $\pi -\vphi$. Notice that
    \begin{equation}
        K_J (\pi - \vphi) = K_{-J} (\vphi)
    \end{equation}
    Also notice that
    \begin{equation}
        \mathbb{E}_J G_J = \int_{\delta J >0} (G_{\bar{J}+\delta J} + G_{\bar{J}-\delta J}) \mathbb{P}_{\delta J}
    \end{equation}
    Therefore,
    \begin{equation}
        \mathbb{E}_J G_J (\vphi) -\mathbb{E}_J G_J (\pi -\vphi) = \int_{\delta J >0} (G_{\bar{J}+\delta J}(\vphi) + G_{\bar{J}-\delta J}(\vphi) - G_{-\bar{J}-\delta J}(\vphi) - G_{-\bar{J}+\delta J}(\vphi)) \mathbb{P}_{\delta J}
    \end{equation}
    Therefore, it is sufficient to show that
    \begin{equation}
        G_{\bar{J}+\delta J}(\vphi) + G_{\bar{J}-\delta J}(\vphi) - G_{-\bar{J}-\delta J}(\vphi) - G_{-\bar{J}+\delta J}(\vphi) \ge 0
    \end{equation}
    For simplicity, we will suppress the $\vphi$ notation in $G_J (\vphi)$ since each term is understood to be at $\vphi$. 
    Notice that there are two cases which we will treat independently; (1) $\bar{J} \ge \delta J \ge 0$, and (2) $\delta J \ge \bar{J} \ge 0$.
    \begin{enumerate}[(1)]
        \item In case (1), it is sufficient to show that
        \begin{equation}
            G_p \ge G_{-p}, \quad \forall p \ge 0
        \end{equation}
        Or equivalently, we need to show that
        \begin{equation}
            K_p - K_{-p} = \int \frac{d\theta}{2\pi} I_0 (\sqrt{1+\eta \cos \theta}) \sinh (p \cos(\theta-\vphi)) \ge 0
        \end{equation}
        Let $\xi_\theta, \eta_\theta$ be the signs of the $x,y$ components of $e^{i\theta}$ (i.e., $\cos\theta, \sin \theta$, respectively) as in Lemma. \eqref{app-lem:U1-sign}, and similarly, $\xi_\vphi, \eta_\vphi$ for $\vphi$. By using the Taylor expansion of the modified Bessel function $I_0$, we see that $I_0$ is a conic function of $\xi_\theta$, and thus
        \begin{align}
            K_p - K_{-p} &=  \sum_{\xi_\theta, \eta_\theta = \pm 1} \int d\theta \{\xi_\theta\} \{\xi_\theta \xi_\vphi, \eta_\theta \eta_\vphi\}^-  \\
            &= \xi_\vphi \sum_{\xi_\theta, \eta_\theta = \pm 1} \int d\theta \{\xi_\theta\} \{\xi_\theta, \eta_\theta  \xi_\vphi\eta_\vphi\}^- 
        \end{align}
        Since we are summing over all $\eta_\theta = \pm 1$, we can perform the bijective transformation $\eta_\theta \mapsto \eta_\theta \xi_\vphi \eta_\vphi$, to find that
        \begin{align}
            K_p - K_{-p} &= \xi_\vphi \sum_{\xi_\theta, \eta_\theta = \pm 1} \int d\theta \{\xi_\theta\} \{\xi_\theta, \eta_\theta  \}^- 
        \end{align}
        Since the integrand does not depend on $\xi_\vphi,\eta_\vphi$ (which are not summed over $\pm 1$), by Lemma \eqref{app-lem:Z2-sum}, we see that $K_p - K_{-p}$ has the same sign as $\xi_\vphi$. Since we assumed that $|\vphi| \le \pi/2$, we see that
        \begin{equation}
            K_p - K_{-p} \ge 0
        \end{equation}

        \item In case (2), it is sufficient to show that for all $p \ge q \ge 0$, we have
        \begin{align}
            G_p +G_{-q} \ge G_{-p} +G_q
        \end{align}
        Equivalently, we need to show that
        \begin{align}
            K_p K_{-q} &\ge K_{-p} K_q \\
            (K_p -K_{-p}) (K_q +K_{-q}) &\ge (K_p +K_{-p}) (K_q -K_{-q}) \\
            \frac{K_p -K_{-p}}{K_p+K_{-p}} &\ge \frac{K_q -K_{-q}}{K_q+K_{-q}}
        \end{align}
        Therefore, it's sufficient to show that the following function is increasing for $p \ge 0$.
        \begin{equation}
            p \mapsto \frac{K_p -K_{-p}}{K_p+K_{-p}}
        \end{equation}
        Alternatively, it's sufficient to show that
        \begin{align}
            \frac{\partial}{\partial p} \left(\frac{K_p -K_{-p}}{K_p+K_{-p}} \right) &\ge 0 \\
            \frac{2}{(K_p+K_{-p})^2} \left( \dot{K}_p K_{-p} +K_p \dot{K}_{-p}\right) &\ge 0
        \end{align}
        Where $\dot{K}$ is short for the dertivative of $p\mapsto K_p$ with respect to $p$. Notice that
        \begin{equation}
            \dot{K}_p K_{-p} +K_p \dot{K}_{-p} = \iint I_0(g_1) I_0 (g_2) e^{ p [\cos (\theta_1 +\vphi) -\cos (\theta_1 +\vphi)]} [\cos (\theta_1 +\vphi) +\cos (\theta_1 +\vphi)]
        \end{equation}
        Where $g_i = \sqrt{1+\eta \cos \theta_i}$ and the double integral is over $\theta_1,\theta_2\in (-\pi,\pi)$. Therefore, the \textbf{rest of this proof} will be to show that the $\dot{K}_p K_{-p} +K_p \dot{K}_{-p} \ge 0$ for all $p\ge 0$.
        Notice that by interchanging $\theta_1 \leftrightarrow \theta_2$, we find that
        \begin{equation}
            \dot{K}_p K_{-p} +K_p \dot{K}_{-p} = \iint \mu (\theta -\vphi) f(\theta)
        \end{equation}
        where $\theta= (\theta_1,\theta_2)$ and $\theta-\vphi =(\theta_1-\vphi,\theta_2-\vphi)$ and that
        \begin{align}
            f(\theta) &= [\cos \theta_1 +\cos \theta_2] \cosh \left( p(\cos \theta_1 -\cos \theta_2)\right) \\
            \mu(\theta) &= I_0 (g_1 )I_0 (g_2) \ge 0
        \end{align}
        Let us introduce the coupled variables (change of variables),
        \begin{equation}
            \label{eq:coupled-variables-Z2-orient}
            \chi = \frac{\theta_1+\theta_2}{2}, \quad \delta = \frac{\theta_1-\theta_2}{2}
        \end{equation}
        And let $\xi_\chi, \eta_\chi $ be the sign of the $x,y$ components of $e^{i\chi}$ (i.e., the signs of $\cos \chi, \sin \chi$, respectively), and similarly for $\xi_\delta,\eta_\delta$, as in Lemma. \eqref{app-lem:U1-sign}. Then we have
        \begin{equation}
            f(\theta) = \{ \xi_\chi \xi_\delta \}^- \{ \eta_\chi \eta_\delta\}^+
        \end{equation}
        And that
        \begin{equation}
            \mu (\theta-\vphi) = \sum_{\xi_\psi,\eta_\psi=\pm 1} \int \{\xi_\delta \xi_\chi \xi_\vphi, \xi_\delta \eta_\chi \eta_\vphi, \xi_\chi \eta_\chi \xi_\vphi \eta_\vphi \} d\psi
        \end{equation}
        Where $\xi_\vphi,\eta_\vphi$ are those corresponding to the phase $\vphi$ (we leave the proof of this claim to Lemma \eqref{lem:measure-mu}).
        Combining them together, we find that
        \begin{align}
             \dot{K}_p K_{-p} +K_p \dot{K}_{-p} &= \sum_{\xi_\chi,\eta_\chi, \xi_\delta, \eta_\delta \xi_\psi, \eta_\psi = \pm 1} \iiint d\chi d\delta d\psi \{ \xi_\chi \xi_\delta \}^- \{ \eta_\chi \eta_\delta\}^+ 
             \{\xi_\delta \xi_\chi \xi_\vphi, \xi_\delta \eta_\chi \eta_\vphi, \xi_\chi \eta_\chi \xi_\vphi \eta_\vphi \} 
        \end{align}
        Since we are summing over all $\xi_\chi, \eta_\chi = \pm 1$, we can perform the bijective transformation $\xi_\chi, \eta_\chi \mapsto \xi_\chi \xi_\vphi, \eta_\chi \eta_\vphi$, to find that
        \begin{align}
             \dot{K}_p K_{-p} +K_p \dot{K}_{-p} &= \xi_\vphi \sum_{\xi_\chi,\eta_\chi, \xi_\delta, \eta_\delta \xi_\psi, \eta_\psi = \pm 1} \iiint d\chi d\delta d\psi \{ \xi_\chi \xi_\delta \}^- \{ \eta_\chi \eta_\delta\}^+ \{\xi_\delta \xi_\chi , \xi_\delta \eta_\chi , \xi_\chi \eta_\chi  \}
        \end{align}
        Notice that the integrand inside the summation is now independent of $\xi_\vphi,\eta_\vphi$ (which is not summed over $\pm 1$). Therefore, (using the fact that the product of conic functions is again conic) by Lemma \eqref{app-lem:Z2-sum}, we see that $\dot{K}_p K_{-p} +K_p \dot{K}_{-p}$ has the same sign as $\xi_\vphi$. Since we assumed that $|\vphi| \le \pi/2$, we see that
        \begin{equation}
            \dot{K}_p K_{-p} +K_p \dot{K}_{-p}\ge 0
        \end{equation}
    \end{enumerate}
\end{proof}

\begin{proof}[Proof of Theorem \eqref{app-thm:TRSB}]
Notice that minimizing $F$ with respect to $\vphi,\eta$ is the equivalent to maximizing $\mathbb{E}_J G_J$ with respect to $\vphi,\eta$ and thus we will prove the latter. More specifically,
\begin{enumerate}
    \item We will first show that for any $\eta\in [0,1]$, if $\bar{J} =0$, then the mapping $\vphi \mapsto \mathbb{E}_J G_J$ is an increasing function when $0 \le \vphi \le \pi/2$ and a decreasing function when $\pi/2 \le \vphi \le \pi$, so that $\vphi = \pi/2$ maximizes $\mathbb{E}_J G_J$. Since $G_J(\vphi) = G_J(-\vphi)$, we see that $\vphi = \pm \pi/2$ maximizes $\mathbb{E}_J G_J$.
    \item We will then show that when $\bar{J} =0$ and $\vphi = \pm \pi/2$, the mapping $\eta \mapsto \mathbb{E}_J G_J$ is an increasing function and thus $\eta =1$ maximizes $\mathbb{E}_J G_J$.
\end{enumerate} 

\begin{enumerate}
    \item Notice that
    \begin{equation}
        \mathbb{E}_{\delta J} G_{\delta J} = \int_{\delta J >0} (G_{\delta J} +G_{-\delta J}) \mathbb{P}_{\delta J}
    \end{equation}
    Therefore, it's sufficent to show that $\vphi \mapsto G_{\delta J} +G_{-\delta J} =\log K_{\delta J} K_{-\delta J}$ is increasing when $0 \le \vphi \le \pi/2$ and a decreasing function when $\pi/2 \le \vphi \le \pi$. 
    Indeed, let us consider
    \begin{equation}
        \frac{\partial}{\partial \vphi} K_{\delta J} K_{-\delta J} = (-1) \iint d\theta_1 d\theta_2 \mu(\theta-\vphi) (\sin \theta_1 -\sin \theta_2 ) e^{\delta J (\cos \theta_1 -\cos\theta_2)}
    \end{equation}
    Where $g_i = \sqrt{1+\eta \cos\theta_i}$, the double integral is over $\theta_1,\theta_2\in (-\pi,\pi)$, $\theta =(\theta_1,\theta_2)$ and $\theta-\vphi=(\theta_1-\vphi,\theta_2-\vphi)$ and
    \begin{equation}
        \mu(\theta) = I_0(g_1) I_0(g_2)
    \end{equation}
    By the symmetry regarding interchanging $\theta_1\leftrightarrow \theta_2$, we see that
    \begin{equation}
         \frac{\partial}{\partial \vphi} K_{\delta J} K_{-\delta J} = \iint d\theta_1 d\theta_2 \mu(\theta-\vphi) f(\theta)
    \end{equation}
    Where
    \begin{equation}
        f(\theta) = (-1) (\sin \theta_1 -\sin \theta_2) \sinh (\delta J (\cos \theta_1 -\cos \theta_2))
    \end{equation}
    Let us use the coupled integration variables, i.e.,
    \begin{equation}
        \label{eq:coupled-variables-TRSB}
        \chi = \frac{\theta_1+\theta_2}{2}, \quad \delta = \frac{\theta_1-\theta_2}{2}
    \end{equation}
    And let $\xi_\chi, \eta_\chi $ be the sign of the $x,y$ components of $e^{i\chi}$ (i.e., the signs of $\cos \chi, \sin \chi$, respectively), and similarly for $\xi_\delta,\eta_\delta$, as in Lemma. \eqref{app-lem:U1-sign}. 
    Then we have
    \begin{equation}
        f(\theta) = \{ \eta_\delta \xi_\chi\}^- \{\eta_\delta \eta_\chi\}^- =\{ \xi_\chi\}^- \{\eta_\chi\}^-
    \end{equation}
    And by Lemma \eqref{lem:measure-mu}, we have
    \begin{equation}
        \mu(\theta-\vphi) = \sum_{\xi_\psi,\eta_\psi=\pm 1} \int_{\psi=-\pi}^\pi \{\xi_\delta \xi_\chi \xi_\vphi, \xi_\delta \eta_\chi \eta_\vphi, \xi_\chi \eta_\chi \xi_\vphi \eta_\vphi \} d\psi d\psi
    \end{equation}
    Combining them together, we find that
    \begin{align}
         \frac{\partial}{\partial \vphi} K_{\delta J} K_{-\delta J} &= \sum_{\xi_\chi,\eta_\chi, \xi_\delta, \eta_\delta, \xi_\psi, \eta_\psi = \pm 1} \iiint d\chi d\delta d\psi \{ \xi_\chi \}^- \{ \eta_\chi \}^- \{\xi_\delta \xi_\chi \xi_\vphi, \xi_\delta \eta_\chi \eta_\vphi, \xi_\chi \eta_\chi \xi_\vphi \eta_\vphi \} 
    \end{align}
    Since we are summing over all $\xi_\chi, \eta_\chi = \pm 1$, we can perform the bijective transformation $\xi_\chi, \eta_\chi \mapsto \xi_\chi \xi_\vphi, \eta_\chi \eta_\vphi$, to find that
    \begin{align}
         \frac{\partial}{\partial \vphi} K_{\delta J} K_{-\delta J} &= \xi_\vphi \eta_\vphi \sum_{\xi_\chi,\eta_\chi, \xi_\delta, \eta_\delta, \xi_\psi, \eta_\psi = \pm 1} \iiint d\chi d\delta d\psi \{ \xi_\chi \}^- \{ \eta_\chi \}^-  \{\xi_\delta \xi_\chi , \xi_\delta \eta_\chi , \xi_\chi \eta_\chi \} \nonumber
    \end{align}
    Where we used the fact that $\{x \}^-$ is odd under $x\mapsto -x$.
    Notice that the integrand inside the summation is now independent of $\xi_\vphi,\eta_\vphi$ (which is not summed over $\pm 1$). Therefore, (using the fact that the product of conic functions is again conic) by Lemma \eqref{app-lem:Z2-sum}, we see that $\partial K_{\delta J} K_{-\delta J}/\partial \vphi$ has the same sign as $\xi_\vphi \eta_\vphi$. Therefore, $K_{\delta J} K_{-\delta J}$ is increasing when $0\le \vphi \le \pi/2$ and decreasing when $\pi/2 \le \vphi \le \pi$.

    \item Let us now set $\vphi = \pi/2$ since we now know that it must maximize $\mathbb{E}_{\delta J} G_{\delta J}$ for any $\eta \in [0,1]$, and consider 
    \begin{equation}
        \frac{\partial}{\partial\eta} K_{\delta J}K_{-\delta J} = \iint d\theta_1 d\theta_2 \nu (\theta) e^{\delta J( \sin \theta_1 -\sin \theta_2)}
    \end{equation}
    Where $\theta=(\theta_1,\theta_2)$ and 
    \begin{equation}
        \nu(\theta) = \frac{I_1 (g_1)}{g_1} \cos \theta_1 I_0 (g_2) +I_0(g_1) \frac{I_1 (g_2)}{g_2} \cos \theta_2
    \end{equation}
    Where $g_i =\sqrt{1+\eta \cos \theta_i}$ as before. 
    By using the Taylor expansion of the modified Bessel function, it's clear that
    \begin{equation}
        \frac{I_1 (g_1)}{g_1} = \{\xi_1\}, \quad I_0(g_2) =\{\xi_2\}
    \end{equation}
    Where $\xi_i, \eta_i$ are the signs of $\cos \theta_i, \sin \theta_i$. Hence,
    \begin{equation}
        \nu(\theta) = \{\xi_1,\xi_2\}
    \end{equation}
    By symmetry regarding interchanging $\theta_1\leftrightarrow \theta_2$, we see that
    \begin{align}
        \frac{\partial}{\partial\eta} K_{\delta J}K_{-\delta J} &= \iint \nu(\theta) \cosh ( \delta J(\sin \theta_1 -\sin \theta_2)) \\
        &= \sum_{\xi_1,\eta_1,\xi_2,\eta_2=\pm 1} \iint \{\xi_1,\xi_2\} \{\eta_1, -\eta_2\}^+
    \end{align}
    Notice that we can perform the bijective transform $\eta_2 \mapsto -\eta_2$ and the statement then follows from Lemma \eqref{app-lem:Z2-sum}.
\end{enumerate}
\end{proof}

\begin{lemma}
    \label{lem:measure-mu}
    \begin{equation}
        \mu (\theta-\vphi) = \sum_{\xi_\psi,\eta_\psi=\pm 1} \int_{\psi=-\pi}^\pi \{\xi_\delta \xi_\chi \xi_\vphi, \xi_\delta \eta_\chi \eta_\vphi, \xi_\chi \eta_\chi \xi_\vphi \eta_\vphi \} d\psi
    \end{equation}
    Where $\theta-\vphi =(\theta_1-\vphi,\theta_2-\vphi)$, and $\xi_s, \eta_s$ are the signs of $\cos s, \sin s$ for $s=\chi,\delta,\psi,\vphi$, and $\chi, \delta$ are the coupled variables defined from $\theta_1,\theta_2$ as done in Eq. \eqref{eq:coupled-variables-Z2-orient} and/or \eqref{eq:coupled-variables-TRSB}.
\end{lemma}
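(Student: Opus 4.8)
The plan is to collapse the \emph{product} of two Bessel functions that defines $\mu(\theta-\vphi)$ into a \emph{single} one carrying one new integration angle $\psi$, and then read off the claimed conic structure. Throughout I use the convention $\beta=2a=1$ of the surrounding proof, so that $\mu(\theta)=I_0(g_1)I_0(g_2)$ with $g_i=\sqrt{1+\eta\cos\theta_i}$, and the target is $\mu(\theta-\vphi)=I_0\big(\sqrt{1+\eta\cos(\theta_1-\vphi)}\big)\,I_0\big(\sqrt{1+\eta\cos(\theta_2-\vphi)}\big)$ with $\theta_1=\chi+\delta$, $\theta_2=\chi-\delta$. First I would pick $v\ge u\ge0$ with $u^2+v^2=1$, $2uv=\eta$ (possible since $0\le\eta\le1$; then $v^2-u^2=\sqrt{1-\eta^2}\ge0$), so that $1+\eta\cos\beta=|u+ve^{i\beta}|^2$ for every $\beta$. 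Inserting the integral representation $I_0(|z|)=\int_{\mathbb{S}^1}e^{\Re(ze^{-i\phi})}\,\frac{d\phi}{2\pi}$ into each factor introduces two dummy angles; passing to their half-sum and half-difference, the half-sum integrates out (re-summing to an $I_0$), and after a harmless rescaling and shift of the surviving angle one arrives at
\begin{equation}
\mu(\theta-\vphi)=\frac{1}{2\pi}\int_{-\pi}^{\pi}I_0\!\big(2\sqrt{\hat X(\psi)}\big)\,d\psi,\qquad \hat X(\psi)=\tfrac12+\tfrac12\cos\psi\cos\delta+\tfrac12\sqrt{1-\eta^2}\,\sin\psi\sin\delta+\tfrac{\eta}{2}\cos(\chi-\vphi)\big(\cos\psi+\cos\delta\big).
\end{equation}

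Next I would expand $I_0(2\sqrt{\hat X})=\sum_{n\ge0}\hat X^n/(n!)^2$ (a conic function of $\hat X$), resolve each trigonometric factor into (sign)$\times$(absolute value) — $\cos\psi=\xi_\psi|\cos\psi|$, $\sin\psi=\eta_\psi|\sin\psi|$, likewise for $\delta$, and $\cos(\chi-\vphi)=\xi_\chi\xi_\vphi|\cos\chi\cos\vphi|+\eta_\chi\eta_\vphi|\sin\chi\sin\vphi|$ — and apply Lemma~\eqref{app-lem:U1-sign} to the $\psi$-integral so that $\xi_\psi,\eta_\psi$ become independent summed $\pm1$ variables. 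Then $\hat X$ is a non-negative combination, with coefficients depending only on absolute values, of the monomials $1,\ \xi_\psi\xi_\delta,\ \eta_\psi\eta_\delta,\ \xi_\delta\xi_\chi\xi_\vphi,\ \xi_\delta\eta_\chi\eta_\vphi,\ \xi_\psi\xi_\chi\xi_\vphi,\ \xi_\psi\eta_\chi\eta_\vphi$; since the last two equal $(\xi_\psi\xi_\delta)(\xi_\delta\xi_\chi\xi_\vphi)$ and $(\xi_\psi\xi_\delta)(\xi_\delta\eta_\chi\eta_\vphi)$, these generate the group $\mathbb{Z}_2^4=\langle\,\xi_\psi\xi_\delta,\ \eta_\psi\eta_\delta,\ \xi_\delta\xi_\chi\xi_\vphi,\ \xi_\delta\eta_\chi\eta_\vphi\,\rangle$. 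Hence $I_0(2\sqrt{\hat X})$ is a non-negative combination of the $16$ group elements, with coefficients that are non-negative functions of the absolute values only (not of the sign variables).

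Finally one sums over $\xi_\psi,\eta_\psi=\pm1$: every group element containing a lone $\xi_\psi$ or a lone $\eta_\psi$ is killed since $\sum_{\xi_\psi=\pm1}\xi_\psi=0$ and the coefficients are blind to $\xi_\psi,\eta_\psi$; what remains is exactly the subgroup $\{1,\ \xi_\delta\xi_\chi\xi_\vphi,\ \xi_\delta\eta_\chi\eta_\vphi,\ \xi_\chi\eta_\chi\xi_\vphi\eta_\vphi\}$ (the product of the two $g$-type generators being $\xi_\delta^2\,\xi_\chi\eta_\chi\xi_\vphi\eta_\vphi=\xi_\chi\eta_\chi\xi_\vphi\eta_\vphi$), still with non-negative coefficients depending on $\psi$. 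Reinstating the $\psi$-integral this is precisely the asserted $\sum_{\xi_\psi,\eta_\psi}\int_{-\pi}^{\pi}\{\xi_\delta\xi_\chi\xi_\vphi,\ \xi_\delta\eta_\chi\eta_\vphi,\ \xi_\chi\eta_\chi\xi_\vphi\eta_\vphi\}\,d\psi$, the constant term $1$ being accounted for as $(\xi_\delta\xi_\chi\xi_\vphi)^2$ inside the conic combination.

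\textbf{Main obstacle.} The real work lies not in this skeleton but in the bookkeeping of the reduction to a single $I_0$ and in checking that every coefficient of $\hat X$ is non-negative. The delicate one is the coefficient $\tfrac12(v^2-u^2)$ of the $\sin\psi\sin\delta$ cross term, which acquires a definite sign only after imposing $v\ge u$; this is harmless because $1+\eta\cos\beta$ is unchanged under $(u,v)\mapsto(v,u)$, equivalently because $\mu$ depends on the two layer magnitudes only through $\eta$. One must also confirm that $\sin\delta$ never survives as a lone sign factor — as it must not, by the manifest symmetry of $\mu$ under $\theta_1\leftrightarrow\theta_2$ (i.e. $\delta\mapsto-\delta$) — and this is automatic in the computation, since $\eta_\delta$ occurs only inside the combination $\eta_\psi\eta_\delta$ and is therefore removed together with $\eta_\psi$ in the final sign sum.
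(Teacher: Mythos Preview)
Your argument is correct and arrives at the same conclusion, but by a genuinely different route from the paper. The paper never introduces the factorization $1+\eta\cos\beta=|u+ve^{i\beta}|^2$; instead it keeps the square roots $h_i=\sqrt{1+\eta\cos(\theta_i-\vphi)}$, collapses $I_0(h_1)I_0(h_2)$ to $\int I_0(|h_1+h_2e^{i\psi}|)\,d\psi/2\pi$, and then symmetrizes over $\psi\mapsto\psi+\pi$ (the sum over $s=\pm1$) so that only even powers of $h_1h_2$ survive the binomial expansion of $|h_1+sh_2e^{i\psi}|^{2n}$. This reduces everything to conic combinations of the two building blocks $h_1^2+h_2^2$ and $(h_1h_2)^2$, each of which is then explicitly expanded in $(\chi,\delta,\vphi)$ and shown to be conic in $\xi_\delta\xi_\chi\xi_\vphi,\ \xi_\delta\eta_\chi\eta_\vphi,\ \xi_\chi\eta_\chi\xi_\vphi\eta_\vphi$; the $\psi$-variable carries no sign structure in the paper's proof (only $\cos^{2k}\psi$ appears), so the $\sum_{\xi_\psi,\eta_\psi}$ in the lemma's statement is inert there. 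Your $(u,v)$ trick linearizes the argument of the single Bessel function from the outset, so the conic structure of $\hat X$ is read off in one line and the heavy lifting is shifted to the group-theoretic projection under $\sum_{\xi_\psi,\eta_\psi}$ via Lemma~\ref{app-lem:U1-sign}. The trade-off: you must check $v\ge u$ to secure the sign of the $\sin\psi\sin\delta$ coefficient (as you note), while the paper pays instead with the longest explicit computation in the proof, namely that of $(h_1h_2)^2$. Both are clean; yours is arguably more systematic and would generalize more readily if further factors were present.
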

\begin{proof}
    Since we wish to compute $\mu(\theta-\vphi)$, let us use the notation $h_i = \sqrt{1+\eta \cos(\theta_i -\vphi)}$. Then we have
    \begin{align}
        \mu(\theta-\vphi) &= I_0 (h_1) I_0 (h_2) \\
        &= \iint e^{h_1 \cos \psi_1 +h_2 \cos \psi_2} \frac{d\psi_1}{2\pi}\frac{d\psi_2}{2\pi} \\
        &= \iint \exp \left[\Re (h_1 +h_2 e^{i\psi} ) e^{i\psi_1}\right] \frac{d\psi_1}{2\pi}\frac{d\psi}{2\pi}, \quad \psi = \psi_2-\psi_1 \\
        &= \int I_0(|h_1 +h_2 e^{i\psi}|) \frac{d\psi}{2\pi} \\
        &= \frac{1}{2} \int \left[ I_0(|h_1 +h_2 e^{i\psi}|) + I_0(|h_1 -h_2 e^{i\psi}|)\right]\frac{d\psi}{2\pi}
    \end{align}
    Notice that 
    \begin{equation}
        I_0(x) = \sum_{n\ge 0} \frac{1}{(n!)^2} \left(\frac{x}{2}\right)^{2n}
    \end{equation}
    And that 
    \begin{align}
        |h_1 \pm h_2 e^{i\psi}|^2 = (h_1^2+h_2^2) \pm 2h_1 h_2 \cos \psi
    \end{align}
    Therefore, we see that
    \begin{align}
        \sum_{s=\pm 1} I_0(|h_1 +s h_2 e^{i\psi}|)  &=\sum_{n\ge 0} \frac{1}{2^{2n} (n!)^2} \sum_{s=\pm 1} |h_1 + s h_2e^{i\psi}|^{2n}
    \end{align}
    For any $n\ge 0$, we see that
    \begin{align}
        \sum_{s=\pm 1} |h_1 + s h_2e^{i\psi}|^{2n} &=\sum_{s=\pm 1} \left[ (h_1^2 +h_2^2) +s (2h_1 h_2 \cos \psi)\right]^n \\
        &= \sum_{m\le n} {n \choose m} (h_1^2 +h_2^2)^{n-m} (2 h_1 h_2 \cos \psi)^m \sum_{s=\pm 1} s^m \\
        &= \sum_{m=2k\le n} \underbrace{{n \choose m} 2^{2k+1}  \cos^{2k} \psi}_{\ge 0}  \times (h_1^2 +h_2^2)^{n-2k} (h_1 h_2)^{2k}
    \end{align}
    Notice that the coefficient in front of $(h_1^2 +h_2^2)^{n-2k} (h_1 h_2)^{2k}$ is nonnegative and thus we only need to consider powers of $h_1^2+h_2^2$ and $(h_1 h_2)^2$. Indeed, notice that
    \begin{align}
        h_1^2 +h_2^2 &= 2(1+\eta \cos(\chi-\vphi) \cos \delta) \\
        &= 2(1+\eta \cos \delta (\cos \chi \cos \vphi +\sin \chi \sin \vphi)) \\
        &= 2(1+\eta |\cos \delta |\xi_\delta (|\cos \chi||\cos \vphi| \xi_\chi \xi_\vphi +|\sin \chi||\sin\vphi| \eta_\chi \eta_\vphi)) \\
        &= 2+\underbrace{2\eta |\cos \delta \cos \chi \cos \vphi|}_{\ge 0} \times \xi_\delta \xi_\chi \xi_\vphi +\underbrace{2\eta |\cos \delta \sin \chi \sin\vphi|}_{\ge 0} \times \xi_\delta \eta_\chi \eta_\vphi\\
        &= \{\xi_\delta \xi_\chi \xi_\vphi, \xi_\delta \eta_\chi \eta_\vphi\}
    \end{align}
    And that
    \begin{align}
        (h_1 h_2)^2 &= 1-\eta^2 +\eta^2 (\cos^2 (\chi -\vphi) +\cos^2 \delta) +2\eta \cos(\chi -\vphi) \cos \delta \\
        &= 1-\eta^2 +\eta^2 (\cos^2 \chi \cos^2 \vphi + \sin^2 \chi \sin^2 \vphi +\cos^2 \delta) +\frac{1}{2} \eta^2 \sin 2\chi \sin 2\vphi \\
        &\qquad \qquad \qquad+2\eta \cos \delta (\cos \chi \cos \vphi +\sin \chi \sin \vphi) \\
        &= \underbrace{1-\eta^2 +\eta^2 (\cos^2 \chi \cos^2 \vphi + \sin^2 \chi \sin^2 \vphi +\cos^2 \delta)}_{\ge 0} +\underbrace{2 \eta^2 |\sin \chi \cos \chi \sin \vphi \cos \vphi|}_{\ge 0} \times  \xi_\chi \eta_\chi \xi_\vphi \eta_\vphi \\
        &\qquad \qquad \qquad +\underbrace{2\eta |\cos \delta \cos \chi \cos \vphi|}_{\ge 0} \times \xi_\delta \xi_\chi \xi_\vphi +\underbrace{2\eta |\cos \delta \sin \chi \sin \vphi|}_{\ge 0} \times \xi_\delta \eta_\chi \eta_\vphi\\
        &= \{\xi_\delta \xi_\chi \xi_\vphi, \xi_\delta \eta_\chi \eta_\vphi, \xi_\chi \eta_\chi \xi_\vphi \eta_\vphi\}
    \end{align}
    Combining this together and using the fact that the product of conic functions is again conic, we see that
    \begin{equation}
        \frac{1}{2} \sum_{s=\pm 1} I_0(|h_1 +s h_2 e^{i\psi}|) = \{\xi_\delta \xi_\chi \xi_\vphi, \xi_\delta \eta_\chi \eta_\vphi, \xi_\chi \eta_\chi \xi_\vphi \eta_\vphi \}
    \end{equation}
    And thus the statement follows.
\end{proof}

\end{document}